\newcommand{\diag}{\mathop{\mathrm{diag}}}
\DeclareMathAlphabet{\mathpzc}{OT1}{pzc}{m}{it}
\newcommand{\RN}[1]{%
  \textup{\uppercase\expandafter{\romannumeral#1}}%
}
\newcommand{\keywords}[1]{\par\addvspace\baselineskip
\noindent\keywordname\enspace\ignorespaces#1}
\authorrunning{E. Bradford et al.}
\titlerunning{Hybrid GP modelling applied to economic SMPC}
\begin{document}

\mainmatter 

\title{Hybrid Gaussian Process Modeling Applied to Economic Stochastic Model Predictive Control of Batch Processes}

\author{Eric Bradford$^*$, Lars Imsland$^*$, Marcus Reble$^\dagger$, and \\ Ehecatl Antonio del Rio-Chanona$^\ddagger$}

\institute{$^*$ Engineering Cybernetics, Norwegian University of Science and Technology, Trondheim, Norway  \\
$^\dagger$ BASF SE, 67056 Ludwigshafen, Germany  \\
$^\ddagger$ Centre for Process Systems Engineering (CPSE), Department of Chemical Engineering, Imperial College London, UK (a.del-rio-chanona@imperial.ac.uk)}

\maketitle

\begin{abstract}
Nonlinear model predictive control (NMPC) is an efficient approach for the control of nonlinear multivariable dynamic systems with constraints, which however requires an accurate plant model. Plant models can often be determined from first principles, parts of the model are however difficult to derive using physical laws alone. In this paper a hybrid Gaussian process (GP) first principles modeling scheme is proposed to overcome this issue, which exploits GPs to model the parts of the dynamic system that are difficult to describe using first principles. GPs not only give accurate predictions, but also quantify the residual uncertainty of this model. It is vital to account for this uncertainty in the control algorithm, to prevent constraint violations and performance deterioration. Monte Carlo samples of the GPs are generated offline to tighten constraints of the NMPC to ensure joint probabilistic constraint satisfaction online. Advantages of our method include fast online evaluation times, possibility to account for online learning alleviating conservativeness, and  exploiting the flexibility of GPs and the data efficiency of first principle models. The algorithm is verified on a case study involving a challenging semi-batch bioreactor.  
\keywords{Uncertain dynamic systems, Back-offs, Model-based nonlinear control, Machine learning, Chance constraints, Robust control \\
This is a pre-peer-review, pre-copyedit version of an article published
in Recent Advances in Model Predictive Control, Lecture Notes in Control and Information Sciences, vol 485, pp. 191-218 (2021). The final authenticated 
version is available online at: \url{http://dx.doi.org/	10.1007/978-3-030-63281-6_8}.}
\end{abstract}

\section{Introduction}
Model predictive control (MPC) refers to a class of control methods, which makes explicit use of a process model to determine a sequence of control actions to take at each sampling time. Feedback is achieved through the repeated update of the initial state. MPC is especially useful to deal with multivariable control problems and important process constraints \cite{Maciejowski2002}. Many processes are highly nonlinear and may be operated at unsteady state, which motivates the use of nonlinear MPC (NMPC). In particular NMPC applications based on first principles models are becoming increasingly popular due to the advent of improved optimization methods and the availability of more models \cite{Biegler2010}. In this paper we focus on \textit{finite horizon} control problems, for which chemical batch processes are a particularly important example. These are employed in many different chemical sectors due to their inherent flexibility. Previous works for batch processes include NMPC based on the extended and unscented Kalman filter \cite{Nagy2007a,Bradford2018c}, polynomial chaos expansions \cite{Mesbah2014,Bradford2019d}, and multi-stage NMPC \cite{Lucia2013}.

A major limitation of NMPC in practice is the requirement of an accurate dynamic plant model, which has been cited to take up to 80$\%$ of the MPC commissioning effort \cite{Sun2013}. The required dynamic model for NMPC is often derived from first principles taking advantage of the available prior knowledge of the process \cite{Nagy2007b}. While this can be an efficient modeling approach, often parts of the model are notoriously difficult to represent using physical laws. In addition, modelling certain phenomena may require excessive amounts of computational time. For example in chemical engineering hybrid models have been developed to capture chemical reaction kinetics \cite{Teixeira2007,Psichogios1992}, the complex mechanics of catalyst deactivation \cite{Azarpour2017}, or for the correction of first principles models using available measurements \cite{Hermanto2011,Bhutani2006}. Most hybrid modelling applications have been focused on using neural networks (NNs). In this paper we propose to use Gaussian processes (GPs) instead \cite{Rasmussen2005} due to their ability to not only provide accurate predictions, but also provide a measure of uncertainty for these predictions difficult to obtain by other nonlinear modeling approaches \cite{Kocijan2005}. It is important to account for this measure of uncertainty to avoid constraint violations and performance deterioration. To consider uncertainty for NMPC formulations explicitly robust MPC \cite{Campo1987} and stochastic MPC \cite{Farina2016} approaches have been developed. Previous works on using GPs for hybrid modelling have been mainly focused on linear ordinary differential equation systems of first- and second order that can be solved exactly, see for example \cite{Sarkka2018,Alvarez2009,Lawrence2007}.           

GP-based MPC was first proposed in \cite{Murray-Smith2003}, in which the GP is recursively updated for reference tracking. In \cite{Kocijan2004,Kocijan2005a} it is proposed to identify the GP offline and apply it online for NMPC instead. The variance therein is constrained to avoid the NMPC steering into regions of high uncertainty. Furthermore, GPs have been used to overcome deviations between the approximate plant model utilized and the real plant model \cite{Klenske2016,Maciejowski2013}. GPs may also act as an efficient surrogate to estimate the mean and variance required for stochastic NMPC \cite{Bradford2018b}. Applications of GP-based MPC includes the control of an unmanned quadrotor \cite{Cao2017}, the control of a gas-liquid separation process \cite{Likar2007}, and the steering of miniature cars \cite{Hewing2018}. While these works show the feasibility of GP-based MPC, most formulations use stochastic uncertainty propagation to account for the uncertainty measure provided by the GP, e.g. \cite{Kocijan2004,Kocijan2005a,Hewing2018,Cao2017}. An overview of these approaches can be found in \cite{Hewing2017}. Major limitations of stochastic propagation is open-loop uncertainty growth, no known method for exact propagation of stochastic uncertainties, and significantly increased computation times. Recently, several papers have proposed alternative techniques to consider the GP uncertainty measure. \cite{Koller2018a} propagate ellipsoidal sets using linearization and accounting for the linearization error by employing Lipschitz constant, which is however relatively conservative. \cite{Maiworm2018} use a robust MPC approach by bounding the one-step ahead error from the GP, while \cite{Soloperto2018} suggest a robust control approach for linear systems to account for unmodelled nonlinearities. This approach may however be infeasible if the deviation between the nonlinear system and linear system is too large. 

In this paper we extend a method first proposed in \cite{Bradford2019b,Bradford2019e} to the hybrid modelling case. The approach determines explicit back-offs to tighten constraints offline using closed-loop Monte Carlo (MC) simulations for \textit{finite horizon} control problems. These then in turn guarantee the satisfaction of probabilistic constraints online. There are several advantages of this approach including avoidance of closed-loop uncertainty growth, fast online computational times, probabilistic guarantees on constraint satisfaction, and explicit consideration of online learning to alleviate conservativeness. In addition, sampled GPs lead to deterministic models that can be easily handled in a hybrid modelling framework. In contrast, obtaining statistical moments for stochastic uncertainty propagation for hybrid models is difficult.       

The paper is comprised of the following sections. In Section \ref{sec:prob_def} the problem definition is given. Thereafter, in Section \ref{sec:solution_approach} we outline the solution approach. Section \ref{sec:case_study} outlines the semi-batch bioprocess case study to be solved, while in Section \ref{sec:results} results and discussions for this case study are presented. Section \ref{sec:conclusions} concludes the paper. 

\section{Problem definition} \label{sec:prob_def}
The dynamic system in this paper is assumed to be given by a discrete-time nonlinear equation system with additive disturbance noise and an unknown function $\mathbf{q}(\cdot,\mathbf{u}_k)$:
\begin{equation} \label{eq:f_x}
    \mathbf{x}_{k+1} = \mathbf{F}(\mathbf{x}_k,\mathbf{u}_k,\mathbf{q}(\cdot,\mathbf{u}_k)) + \bm{\upomega}_k, \quad \mathbf{x}(0) \sim \mathcal{N}(\bm{\upmu}_{\mathbf{x}_0},\bm{\Sigma}_{\mathbf{x}_0}) 
\end{equation}
where $\mathbf{x}_k \in \mathbb{R}^{n_{\mathbf{x}}}$ represent the states, $\mathbf{u}_k$ denotes the control inputs, $\mathbf{q}:\mathbb{R}^{n_{\mathbf{x}}} \times \mathbb{R}^{n_{\mathbf{u}}} \rightarrow \mathbb{R}^{n_{\mathbf{q}}}$ are unknown nonlinear functions, and $\mathbf{F}:\mathbb{R}^{n_{\mathbf{x}}} \times \mathbb{R}^{n_{\mathbf{u}}} \times \mathbb{R}^{n_{\mathbf{q}}} \rightarrow \mathbb{R}^{n_{\mathbf{x}}}$ are known nonlinear functions. The initial condition $\mathbf{x}(0)$ is assumed to follow a Gaussian distribution with mean $\bm{\upmu}_{\mathbf{x}_0}$ and covariance $\bm{\Sigma}_{\mathbf{x}_0}$. Additive disturbance noise is denoted by $\bm{\upomega}_k$, which is assumed to follow a Gaussian distribution with zero mean and covariance matrix $\bm{\Sigma}_{\bm{\upomega}}$, $\bm{\upomega}_k \sim \mathcal{N}(\mathbf{0},\bm{\Sigma}_{\bm{\upomega}})$.   

Note most first principles models are given in continuous-time, which has important implications on the unknown function $\mathbf{q}(\cdot,\mathbf{u}_k)$. For example, this model needs to be well-identified not only at these discrete times. Let $\delta_{t}=t_k - t_{k-1}$ be a constant sampling time at which measurements are taken. The corresponding continuous-time model to $\mathbf{F}(\cdot)$ is represented by $\mathbf{f}(\cdot)$:      
\begin{equation} \label{eq:disc_x}
    \mathbf{F}(\mathbf{x}_k,\mathbf{u}_k,\mathbf{q}(\cdot,\mathbf{u}_k)) = \int_{t_{k}}^{t_{k+1}} \mathbf{f}(\mathbf{x}(t),\mathbf{u}_k,\mathbf{q}(\mathbf{x}(t),\mathbf{u}_k)) \text{d}t + \mathbf{x}_k
\end{equation}
where $\mathbf{x}_k = \mathbf{x}(t_k)$ is the value of the state at discrete-time $k$.

In general $\mathbf{q}(\cdot)$ may be composed of $n_{\mathbf{q}}$ separate scalar functions, such that \\ $\mathbf{q}(\mathbf{x},\mathbf{u})=[q_1(\mathbf{q}^{in}_1(\mathbf{x},\mathbf{u})),\ldots,q_{n_{\mathbf{q}}}(\mathbf{q}^{in}_{n_{\mathbf{q}}}(\mathbf{x},\mathbf{u}))]^{\sf T}$ with $n_q$ separate input functions $\mathbf{q}^{in}_{i}:\mathbb{R}^{n_{\mathbf{x}}} \times \mathbb{R}^{n_{\mathbf{u}}} \rightarrow \mathbb{R}^{n_{\mathbf{q}^{in}_{i}}}$ for $i=1,\ldots,n_{\mathbf{q}}$. Note these input functions are assumed to be known, since commonly the unknown function denotes an unmodelled physical process, for which the inputs are known a priori. The input dimension $n_{\mathbf{q}^{in}_{i}}$ is usually much lower than the dimension of states and control inputs combined, and therefore modelling these components can be considerably more data efficient than determining the full state space model from data instead.    

The variable $\bm{\upomega}_k$ represents additive disturbance noise with zero mean and a covariance matrix $\bm{\Sigma}_{\bm{\upomega}}$. The measurement at discrete time $t=t_k$ can be expressed as follows:
\begin{equation} \label{eq:y}
    \mathbf{y}_k = \mathbf{H} \mathbf{x}_k + \bm{\upnu}_k
\end{equation}
where $\mathbf{y}_k$ is the corresponding measurement, $\mathbf{H} \in \mathbb{R}^{n_{\mathbf{y}} \times n_\mathbf{x}}$ is the linear observation model, and $\bm{\upnu}_k$ denotes additive measurement noise with zero mean and a covariance matrix $\bm{\Sigma}_{\bm{\upnu}}$.  

The aim of the control problem the minimization of a finite-horizon cost function:
\begin{equation} \label{eq:objdef}
    V_T(\mathbf{x}_0,\mathbf{U}) = \mathbb{E}\left[\sum_{k=0}^{T-1} \ell(\mathbf{x}_k,\mathbf{u}_k) + \ell_f(\mathbf{x}_T) \right]
\end{equation}
where $T \in \mathbb{N}$ is the time horizon, $\mathbf{U}=[\mathbf{u}_0,\ldots,\mathbf{u}_{T-1}]^{\sf T} \in \mathbb{R}^{T \times n_{\mathbf{u}}}$ is a joint matrix over all control inputs for time horizon $T$, $\ell:\mathbb{R}^{n_{\mathbf{x}}} \times \mathbb{R}^{n_{\mathbf{u}}} \rightarrow \mathbb{R}$ represent the stage costs, and $\ell_f:\mathbb{R}^{n_{\mathbf{x}}}$ is the terminal cost. 

The control inputs are subject to hard constraints:
\begin{equation} \label{eq:ucon}
    \mathbf{u}_k \in \mathbb{U} \quad \forall k \in \{0,\ldots,T-1\} 
\end{equation}

The states are subject to the satisfaction of a joint nonlinear chance constraint over the time horizon $T$, which can be stated as:
\begin{subequations} \label{eq:xcon}
\begin{align}
    & \mathbb{P} \left\{ \bigcap^T_{k=0} \{ \mathbf{x}_k \in \mathbb{X}_k \}    \right\} \geq 1-\epsilon
\end{align}
where $\mathbb{X}_t$ is defined as:
\begin{align}
    & \mathbb{X}_k = \{ \mathbf{x} \in \mathbb{R}^{n_{\mathbf{x}}} \mid g_j^{(k)}(\mathbf{x}) \leq 0, j=1,\ldots,n_g \}
\end{align}
\end{subequations}
The state constraint requires the joint event of all $\mathbf{x}_k$ for all $k \in \{0,\ldots,T\}$ fulfilling the nonlinear constraint sets $\mathbb{X}_k$ to have a probability greater than $1-\epsilon$.  

It is assumed that $\mathbf{f}(\cdot)$ and $\mathbf{q}^{in}_{i}$ for $i=1,\ldots,n_{\mathbf{q}}$ are known, while $\mathbf{q}(\cdot)$ is unknown and needs to be identified from data. We assume we are given $N$ noisy measurements according to Equation \ref{eq:y}, which is given by the following two matrices:
\begin{subequations} \label{eq:datasetdef}
\begin{align}
    & \mathbf{Z} = [\mathbf{z}^{(1)}_k,\ldots,\mathbf{z}^{(N)}_k]^{\sf T} \in \mathbb{R}^{N \times n_{\mathbf{z}}} \\
    & \mathbf{Y} = [\mathbf{y}^{(1)}_{k+1},\ldots,\mathbf{y}^{(N)}_{k+1}]^{\sf T} \in \mathbb{R}^{N \times n_{\mathbf{y}}}
\end{align}
\end{subequations}
where $\mathbf{z}^{(i)}_k=(\mathbf{x}^{(i)}_k,\mathbf{u}^{(i)}_k)$ is a tuple of $\mathbf{x}^{(i)}_k$ and $\mathbf{u}^{(i)}_k$, which are the $i$-th input of the data at discrete time $k$ with corresponding noisy measurements given by $\mathbf{y}_{k+1}^{(i)}$ at discrete time $k+1$. The matrix $\mathbf{Z}$ is a collection of input data with the corresponding noisy observations collected in $\mathbf{Y}$.    

The noise in this problem arises in part from the additive disturbance noise $\bm{\upomega}$ and from the noisy initial condition $\mathbf{x}_0$. The more important source of noise however originates from the unknown function $\mathbf{q}(\cdot)$, which is identified from only \textit{finite} amount of data. To solve this problem we train GPs to approximate $\mathbf{q}(\cdot)$ from the data in Equation \ref{eq:datasetdef}. In the next section we first introduce GPs to model the function $\mathbf{q}(\cdot)$, which then also represent the residual uncertainty of $\mathbf{q}(\cdot)$. This uncertainty representation is thereafter exploited to obtain the required stochastic constraint satisfaction of the closed-loop system.      

\section{Solution approach} \label{sec:solution_approach}
\subsection{Gaussian process hybrid model training} \label{sec:hybrid_GP_training}
In this section we introduce GPs to obtain a probabilistic model description for $\mathbf{q}(\cdot)$. For more information on GPs refer to \cite{Rasmussen2005}. For this we use a separate GP for each component $q_i(\mathbf{q}^{in}_i(\mathbf{x},\mathbf{u}))$ for $i=1,\ldots,n_{\mathbf{q}}$, which is standard practice in the GP community to handle multivariate outputs \cite{Deisenroth2011}. Let $i$ refer to the GP of function $i$ of $\mathbf{q}$. 

A GP describes a distribution over functions and can be viewed as a generalization of multivariate Gaussian distributions. We assume $q_i(\cdot)$ is distributed as a GP with mean function $m_i(\cdot)$ and covariance function $k_i(\cdot,\cdot)$, which fully specifies the GP prior: 
\begin{equation}
    q_i(\cdot) \sim \mathcal{GP}(m_i(\cdot),k_i(\cdot,\cdot)) \label{eq:fGPdist}
\end{equation}

The \textit{choice} of the mean and covariance function define the GP prior. In this study we use a zero mean function and the squared-exponential (SE) covariance function:
\begin{subequations}
\begin{align}
    & m_i(\mathbf{q}_i^{in}) := 0 \label{eq:meanfundef} \\
    & k_i(\mathbf{q}_i^{in},\mathbf{q}_i^{'in}) := \zeta^2_i \exp \left( -\frac{1}{2}(\mathbf{z} - \mathbf{z}')^{\sf T}\bm{\Lambda}^{-2}_i(\mathbf{z} - \mathbf{z}') \right) \label{eq:covfundef}
\end{align}
\end{subequations}
where $\mathbf{q}_i^{in},\mathbf{q}_i^{'in} \in \mathbb{R}^{n_{\mathbf{z}}}$ are arbitrary inputs, $\zeta^2_i$ denotes the covariance magnitude, and $\bm{\Lambda}_i^{-2} := \diag(\lambda_1^{-2},\ldots,\lambda_{n_{\mathbf{z}}}^{-2})$ is a scaling matrix.

\begin{remark}[Prior assumptions]
Zero mean can be realized by normalizing the data. Choosing the SE covariance function assumes the function to be modelled $q_i(\cdot)$ to be smooth and stationary.
\end{remark}

Now assume we are given $N$ values of $q_i(\cdot)$ , which we jointly denote as $\mathbf{Q}_i=[q_i^{(1)},\ldots,q_i^{(N)}]^{\sf T} \in \mathbb{R}^N$ and assume these correspond to their $q(\cdot)$ values at the inputs defined in $\mathbf{Z}$ in Equation \ref{eq:datasetdef}. The corresponding input response matrices to $\mathbf{Z}$ are then given by $\mathbf{Q}^{in}_i=[\mathbf{q}_i^{in}(\mathbf{z}^{(1)}_k),\ldots,\mathbf{q}_i^{in}(\mathbf{z}^{(N)}_k)]^{\sf T} \in \mathbb{R}^{N \times n_{\mathbf{q}_i^{in}}}$ . According to the GP prior the data vectors $\mathbf{Q}_i$ follow the following multivariate normal distribution:
\begin{equation} \label{eq:prior_Qi}
    \mathbf{Q}_i \sim \mathcal{N}(\mathbf{0},\bm{\Sigma}_{\mathbf{Q}_i})
\end{equation}
where $[\bm{\Sigma}_{\mathbf{Q}_i}]_{lm} = k_i(\mathbf{q}_i^{in}(\mathbf{z}^{(l)}_k),\mathbf{q}_i^{in}(\mathbf{z}^{(m)}_k)) + \sigma^2_{\nu i} \delta_{l,m}$ for each pair $(l,m)\in\{ 1,\ldots,N\}^2$. In essence this places a likelihood on the training dataset based on the continuity and smoothness assumptions made by the choice of the covariance function. The characteristic length-scales and hyperparameters introduced are jointly denoted by $\bm{\Psi}_i=[\lambda_1^{2},\ldots,\lambda_{n_{\mathbf{q}_i^{in}}}^{2},\zeta^2_i,\sigma^2_{\nu i}]^{\sf T}$. 

\textit{Given} a value of $\mathbf{Q}_i$ we can further determine a likelihood for values not part of $\mathbf{Q}_i$ using conditioning. Let $\hat{\mathbf{Q}_i}$ represent $\hat{N}$ such values at the inputs $\hat{\mathbf{Q}}^{in}_i=[\mathbf{q}_i^{in}(\hat{\mathbf{z}}^{(1)}_k),\ldots,\mathbf{q}_i^{in}(\hat{\mathbf{z}}^{(\hat{N})}_k)]^{\sf T} \in \mathbb{R}^{\hat{N} \times n_{\mathbf{q}_i^{in}}}$. From the prior GP assumption $\mathbf{Q}_i$ and $\hat{\mathbf{Q}}_i$ follow a joint Gaussian distribution:
\begin{align}
    & \begin{bmatrix}
    \mathbf{Q}_i   \\
    \hat{\mathbf{Q}}_i 
\end{bmatrix} \sim \mathcal{N}\left(\begin{bmatrix}
    \mathbf{0}   \\
    \mathbf{0} 
\end{bmatrix},\begin{bmatrix}
    \bm{\Sigma}_{\mathbf{Q}_i} & \bm{\Sigma}_{\hat{\mathbf{Q}}_i,\mathbf{Q}_i}^{\sf T}  \\
    \bm{\Sigma}_{\hat{\mathbf{Q}}_i,\mathbf{Q}_i} & \bm{\Sigma}_{\hat{\mathbf{Q}}_i}
\end{bmatrix}\right) \label{eq:jointnormal}
\end{align}    
where $[\bm{\Sigma}_{\hat{\mathbf{Q}}_i}]_{lm} = k(\mathbf{q}_i^{in}(\hat{\mathbf{z}}^{(l)}_k),\mathbf{q}_i^{in}(\hat{\mathbf{z}}^{(m)}_k)) + \sigma^2_{\nu i} \delta_{l,m}$ for each pair $(l,m)\in\{ 1,\ldots,\hat{N}\}^2$ and $[\bm{\Sigma}_{\hat{\mathbf{Q}}_i,\mathbf{Q}_i}]_{lm} = k(\mathbf{q}_i^{in}(\hat{\mathbf{z}}^{(l)}_k),\mathbf{q}_i^{in}(\mathbf{z}^{(m)}_k))$ for each pair $(l,m)\in \{ 1,\ldots,N\} \times \{ 1,\ldots,\hat{N}\}$ \cite{Rasmussen2005}. 

The likelihood of $\hat{\mathbf{Q}}_i$ conditioning on $\mathbf{Q}_i$ is then given by:
\begin{align}
    \hat{\mathbf{Q}}_i \sim \mathcal{N}(\bm{\upmu}_{\hat{\mathbf{Q}}_i}|\mathbf{Q}_i, \bm{\Sigma}_{\hat{\mathbf{Q}}_i}|\mathbf{Q}_i)
\end{align}
where $\bm{\upmu}_{\hat{\mathbf{Q}}_i}|\mathbf{Q}_i := \bm{\Sigma}_{\hat{\mathbf{Q}}_i,\mathbf{Q}_i}^{\sf T} \bm{\Sigma}_{\hat{\mathbf{Q}}_i}^{-1} \mathbf{Q}_i$ and $\bm{\Sigma}_{\hat{\mathbf{Q}}_i}|\mathbf{Q}_i = \bm{\Sigma}_{\hat{\mathbf{Q}}_i} - \bm{\Sigma}_{\hat{\mathbf{Q}}_i,\mathbf{Q}_i}^{\sf T} \bm{\Sigma}_{\mathbf{Q}_i} \bm{\Sigma}_{\hat{\mathbf{Q}}_i,\mathbf{Q}_i}$. 

So far the treatment of GPs has been relatively standard, however we are unable to observe $\mathbf{Q}_i$ and $\hat{\mathbf{Q}}_i$ directly. This problem is a common occurrence for latent state space models, for which MCMC sampling \cite{Frigola2013} or maximum a posteriori (MAP) \cite{Ko2011} has been applied. In this paper we apply MAP to obtain the required vectors $\mathbf{Q}_i$, for which we require the following likelihood based  on Equation \ref{eq:xcon} and Equation \ref{eq:y}:
\begin{subequations}
\begin{align}
    & \mathbf{y}_{k+1} \sim \mathcal{N}(\mathbf{H} \mathbf{x}_{k+1},\bm{\Sigma}_{\bm{\upnu}} + \mathbf{H} \bm{\Sigma_{\bm{\omega}}} \mathbf{H}^{\sf T})
\end{align}
\end{subequations}
where $\mathbf{x}_{k+1} = \int_{t_k}^{t_{k+1}} \mathbf{f}(\mathbf{x}(t),\mathbf{u}_k,\mathbf{q}(\mathbf{x}(t),\mathbf{u}_k)) dt + \mathbf{x}_k$ is dependent on the dynamics and crucially on the unknown function $\mathbf{q}(\cdot)$. 

Let $\mathbf{Q}$, $\hat{\mathbf{Q}}$, and $\bm{\Psi}$ refer to the joint $\mathbf{Q}_i$, $\hat{\mathbf{Q}}_i$ and $\bm{\Psi}_i$ respectively, i.e. $\mathbf{Q}=[\mathbf{Q}_1,\ldots,\mathbf{Q}_{n_{\mathbf{q}}}]$, $\hat{\mathbf{Q}}=[\hat{\mathbf{Q}}_1,\ldots,\hat{\mathbf{Q}}_{n_{\mathbf{q}}}]$, and $\bm{\Psi} = [\bm{\Psi}_1,\ldots,\bm{\Psi}_{n_{\mathbf{q}}}]$.

Based on the different likelihoods we can now write down the likelihood equation for the data: 
\begin{subequations}
\begin{align} \label{eq:y_likelihood}
     p(\mathbf{Y}|\mathbf{Q},\hat{\mathbf{Q}},\bm{\Psi},\mathbf{Z}) \propto & \quad p(\mathbf{Y}|\mathbf{Q},\hat{\mathbf{Q}},\mathbf{Z}) p(\hat{\mathbf{Q}}|\mathbf{Q},\bm{\Psi},\mathbf{Z}) \nonumber \\ & \times p(\mathbf{Q}|\bm{\Psi},\mathbf{Z}) p(\mathbf{Q}) p(\hat{\mathbf{Q}}) p(\bm{\Psi}) 
\end{align}
where the different likelihoods are given as follows:
\begin{align}
    & p(\mathbf{Y}|\hat{\mathbf{Q}},\mathbf{Z}) && =\prod_{j=1}^{N} \mathcal{N}(\mathbf{y}_{k+1}^{(j)};\hat{\mathbf{F}}(\mathbf{x}_k^{(j)},\mathbf{u}_k^{(j)},\hat{\mathbf{Q}}),\bm{\Sigma}_{\bm{\upnu}} + \mathbf{H} \bm{\Sigma_{\bm{\omega}}} \mathbf{H}^{\sf T}) \\
    & p(\hat{\mathbf{Q}}|\mathbf{Q},\bm{\Psi},\mathbf{Z}) &&  =\prod_{i=1}^{n_{\mathbf{q}}} \mathcal{N}(\hat{\mathbf{Q}}_i;\bm{\upmu}_{\hat{\mathbf{Q}}_i}|\mathbf{Q}_i, \bm{\Sigma}_{\hat{\mathbf{Q}}_i}|\mathbf{Q}_i) \\
    & p(\mathbf{Q}|\bm{\Psi},\mathbf{Z}) && =\prod_{i=1}^{n_{\mathbf{q}}} \mathcal{N}(\mathbf{Q}_i;\mathbf{0},\bm{\Sigma}_{\mathbf{Q}_i})
\end{align}
\end{subequations}
where $\hat{\mathbf{F}}(\mathbf{x}_k,\mathbf{u}_k,\hat{\mathbf{Q}})$ refers to a discredized state-space model, for which $\hat{\mathbf{Q}}$ represents the values of $\mathbf{q}(\cdot)$ at the discretizaton points. The likelihoods stated above can be understood as follows: $p(\mathbf{Y}|\mathbf{Q},\hat{\mathbf{Q}},\mathbf{Z})$ is the likelihood of the observed data given $\hat{\mathbf{Q}},\mathbf{Z}$, $p(\hat{\mathbf{Q}}|\mathbf{Q},\bm{\Psi},\mathbf{Z})$ is the likelihood of $\hat{\mathbf{Q}}$ given $\mathbf{Q},\bm{\Psi},\mathbf{Z}$, and lastly $p(\mathbf{Q}|\bm{\Psi},\mathbf{Z})$ refers to the likelihood of $\mathbf{Q}$ given $\bm{\Psi},\mathbf{Z}$.  

\begin{example}[Example discretization for MAP]
$\hat{\mathbf{F}}(\mathbf{x}_k,\mathbf{u}_k,\hat{\mathbf{Q}})$ can in general represent any valid discretization rule. Assume for example we apply the trapezium rule for discretization, then we obtain the following relation for the known input $\mathbf{z}_{k}^{(j)} = (\mathbf{x}_{k}^{(j)}, \mathbf{u}_k^{(j)})$:
\begin{equation} \label{eq:example_disc}
    \mathbf{x}^{(j)}_{k+1} = \mathbf{x}_{k}^{(j)} + 0.5 \delta_t \left(\mathbf{f}(\hat{\mathbf{x}}_{1}^{(j)},\mathbf{u}_k^{(j)},\hat{\mathbf{q}}_1^{(j)})+\mathbf{f}(\hat{\mathbf{x}}_{2}^{(j)},\mathbf{u}_k^{(j)},\hat{\mathbf{q}}_2^{(j)})\right)  
\end{equation}
where $\hat{\mathbf{x}}_i$ and $\hat{\mathbf{q}}_1^{(j)}$  refer to the $i$th state and $q$-value of the discretization rule and $\hat{\mathbf{F}}(\mathbf{x}_k,\mathbf{u}_k,\hat{\mathbf{Q}}) = \mathbf{x}^{(j)}_{k+1}$. These states $\hat{\mathbf{x}}_{1}^{(j)} = \mathbf{x}_{k}^{(j)}$ and $\hat{\mathbf{x}}_{2}^{(j)} = \mathbf{x}_{k+1}^{(j)}$, however note in general the initial- and end-point may not be part of the discretization points. Let the number of discretization points required per interval be given by $d_s$, such that for the trapezium rule above $d_s=2$. The corresponding matrices required for the MAP likelihood are given by $\hat{\mathbf{Q}}=[\hat{\mathbf{q}}_1^{(1)},\ldots,\hat{\mathbf{q}}_{d_s}^{(1)},\ldots,\hat{\mathbf{q}}_1^{(N)},\ldots,\hat{\mathbf{q}}_{d_s}^{(N)}]^{\sf T} \in \mathbb{R}^{(d_s N)\times n_{\mathbf{q}}}$ and \\ $\hat{\mathbf{Z}}=[\hat{\mathbf{z}}_1^{(1)},\ldots,\hat{\mathbf{z}}_{d_s}^{(1)},\ldots,\hat{\mathbf{z}}_1^{(N)},\ldots,\hat{\mathbf{z}}_{d_s}^{(N)}]^{\sf T} \in \mathbb{R}^{(d_s N)\times n_{\mathbf{q}}}$, where $\hat{\mathbf{z}}_{i}^{(j)}=(\hat{\mathbf{x}}_{i}^{(j)},\mathbf{u}_k^{(j)})$. For implicit integration rules as the one above either a Newton solver needs to be employed or the unknown values $\hat{\mathbf{x}}_{2}^{(j)}$ are added to the optimization variables with Equation \ref{eq:example_disc} as additional equality constraints for each training data-point.   
\end{example}

The remaining likelihoods $p(\mathbf{Q})$, $p(\hat{\mathbf{Q}})$, and $p(\bm{\Psi})$ are prior distributions of $\mathbf{Q}$, $\hat{\mathbf{Q}}$, and $\bm{\Psi}$ respectively. These are a helpful tool to avoid overfitting and can be used to easily integrate prior knowledge into the optimization problem, e.g. knowledge on the approximate magnitude of $\mathbf{Q}$. Refer to \cite{Ko2011} for examples on how priors can be used to incorporate prior knowledge on latent variables, such as $\mathbf{Q}$.     

The required values for $\mathbf{Q}$ and $\bm{\Psi}$ are then found by minimizing the negative log-likelihood of Equation \ref{eq:y_likelihood}:
\begin{equation}
  (\mathbf{Q}^*,\bm{\Psi}^*,\hat{\mathbf{Q}}^*) \in \underset{\mathbf{Q},\bm{\Psi},\hat{\mathbf{Q}}}{\text{argmin}} \, \mathcal{L}(\mathbf{Q},\bm{\Psi}) = -\log p(\mathbf{Y}|\mathbf{Q},\hat{\mathbf{Q}},\bm{\Psi},\mathbf{Z}) 
\end{equation}
where $\mathbf{Q}^*$, $\bm{\Psi}^*$, $\hat{\mathbf{Q}}^*$ are the required maximum a posteriori (MAP) estimates. 

In the following sections we assume that the GP has been fitted in this way such that we have MAP values $\mathbf{Q}^*$ and for $\bm{\Psi}^*$. The predictive distribution of $\mathbf{q}(\cdot)$ at an arbitrary input $\mathbf{z}=(\mathbf{x},\mathbf{u})$ is then given the dataset $\mathcal{D}=(\mathbf{Z},\mathbf{Q}^*)$:
\begin{subequations} \label{eq:GP_pred}
\begin{align}
    \mathbf{q}(\mathbf{z})|\mathcal{D} \sim \mathcal{N}(\bm{\upmu}_q(\mathbf{z};\mathcal{D}),\bm{\Sigma}_q(\mathbf{z};\mathcal{D}))
\end{align}

with
\begin{align} \label{GP_pred}
& \bm{\upmu}_q(\mathbf{z};\mathcal{D}) = [\mathbf{k}_1^{\sf T} \bm{\Sigma}^{-1}_{\mathbf{Q}_1} \mathbf{Q}_1^*,\ldots,\mathbf{k}_{n_{\mathbf{q}}}^{\sf T} \bm{\Sigma}^{-1}_{\mathbf{Q}_{n_{\mathbf{q}}}} \mathbf{Q}^*_{n_{\mathbf{q}}}]^{\sf T} \\
& \bm{\Sigma}_q(\mathbf{z};\mathcal{D}) = \\ & \qquad \quad \diag\left(\zeta^{2*}_1 + \sigma^{2*}_{\nu 1} - \mathbf{k}_1^{\sf T} \bm{\Sigma}^{-1}_{\mathbf{Q}_1} \mathbf{k}_1,\ldots,\zeta^{2*}_{n_{\mathbf{q}}} + \sigma^{2*}_{\nu n_{\mathbf{q}}} - \mathbf{k}_{n_{\mathbf{q}}}^{\sf T} \bm{\Sigma}^{-1}_{\mathbf{Q}_{n_{\mathbf{q}}}} \mathbf{k}_{n_{\mathbf{q}}} \right) \nonumber
\end{align}
\end{subequations}
where $\mathbf{k}_i=[k_i(\mathbf{q}_i^{in}(\mathbf{z}),\mathbf{q}_i^{in}(\mathbf{z}^{(1)}_k)),\ldots,k_i(\mathbf{q}_i^{in}(\mathbf{z}),\mathbf{q}_i^{in}(\mathbf{z}^{(N)}_k))]^{\sf T}$. In Figure \ref{fig:Prior_posterior} we illustrate a prior GP in the top graph and the posterior GP in the bottom graph.

\begin{figure}[H] \centering
   \includegraphics[width=0.95\textwidth]{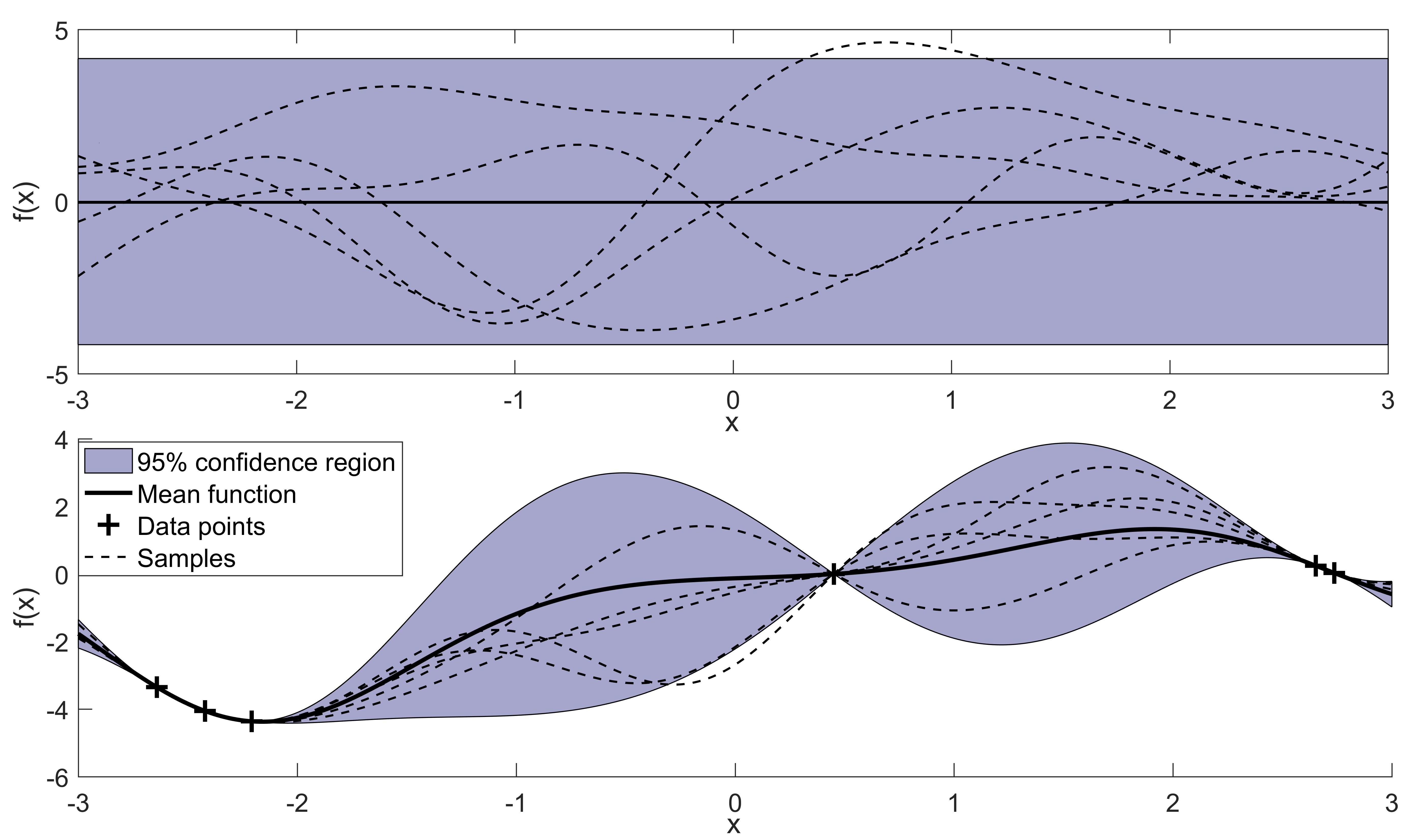}
  \caption{Illustration of a GP of a 1-dimensional function perturbed by noise. On the top the prior of the GP is shown, while on the bottom the Gaussian process was fitted to several data points to obtain the posterior.}
  \label{fig:Prior_posterior}
\end{figure}

\subsection{Hybrid Gaussian process model predictive control formulation} \label{sec:hybrid_GP_MPC}
In this section we define the NMPC optimal control problem (OCP) based on the GP hybrid \textit{nominal} model fitted in the previous section, where the \textit{nominal} model refers to the mean function in Equation \ref{eq:GP_pred}. The initial state for the GP hybrid NMPC formulation is assumed to be measured or estimated, and propagated forward using Equation \ref{eq:f_x}. The predicted states are exploited to optimize the objective subject to the tightened constraints. Let the corresponding optimization problem be denoted as $P_T\left(\bm{\upmu}_q(\cdot;\mathcal{D});\mathbf{x},k\right)$ for the current \textit{known} state $\mathbf{x}$ at discrete time $k$ based on the mean function $\bm{\upmu}_q(\cdot;\mathcal{D})$:
\begin{equation}
\begin{aligned} \label{eq:nominalMPC}
& \underset{\hat{\mathbf{U}}_{k:T-1}}{\text{minimize}} \quad \hat{V}_T(\mathbf{x},k,\hat{\mathbf{U}}_{k:T-1})  =  \sum_{j=k+1}^{T-1} \ell(\hat{\mathbf{x}}_j,\hat{\mathbf{u}}_j)  +  \ell_f(\hat{\mathbf{x}}_T) \\ 
& \text{subject to:}  \\
& \hat{\mathbf{x}}_{j+1} = \hat{\mathbf{F}}(\hat{\mathbf{x}}_j,\hat{\mathbf{u}}_j,\bm{\upmu}_q(\hat{\mathbf{z}}(t);\mathcal{D})), \quad \hat{\mathbf{z}}(t) = (\hat{\mathbf{x}}(t),\hat{\mathbf{u}}_j) \quad \forall j \in \{k,\ldots,T-1\} \\
& \hat{\mathbf{x}}_{j+1} \in  \overline{\mathbb{X}}_{j+1}, \quad \hat{\mathbf{u}}_j \in \mathbb{U} \quad \forall j \in \{k,\ldots,T-1\} \\
& \hat{\mathbf{x}}_k = \mathbf{x}
\end{aligned}
\end{equation}
where $\hat{\mathbf{x}}$, $\hat{\mathbf{u}}$, and $\hat{V}_T(\cdot)$ refers to the states, control inputs, and control objective of the MPC formulation, $\hat{\mathbf{U}}_{k:T-1} = [\hat{\mathbf{u}}_k,\ldots,\hat{\mathbf{u}}_{T-1}]^{\sf T}$, and $\overline{\mathbb{X}}_k$ is a tightened constraint set denoted by: $\overline{\mathbb{X}}_{k} = \{\mathbf{x} \in \mathbb{R}^{n_x} \; | \; g_i^{(k)}(\mathbf{x}) + b_i^{(k)} \leq 0, \, i=1,\ldots,n_g\}$. The variables $b_i^{(k)}$ represent so-called back-offs, which tighten the original constraints $\mathbb{X}_k$ defined in Equation \ref{eq:xcon}.

\begin{remark}[Objective in expectation]
Note the objective above in Equation \ref{eq:nominalMPC} aims to determine the optimal trajectory for the \textit{nominal} and not the expectation of the objective as defined in Equation \ref{eq:objdef}, since it is computationally expensive to obtain the expectation of a nonlinear function \cite{Hewing2017}. Further, the difference between the expectation and the nominal system is commonly marginal. 
\end{remark}

The NMPC algorithm solves $P_T\left(\bm{\upmu}_q(\cdot;\mathcal{D});\mathbf{x}_k,k\right)$ at each sampling time $t_k$ given the current state $\mathbf{x}_k$ to obtain an optimal control sequence:
\begin{align}
& \hat{\mathbf{U}}^*_{k:T-1}\left(\bm{\upmu}_q(\cdot;\mathcal{D});\mathbf{x}_k,k\right) =  [\hat{\mathbf{u}}^*_k\left(\bm{\upmu}_q(\cdot;\mathcal{D});\mathbf{x}_k,k\right),\ldots,\hat{\mathbf{u}}^*_{T-1}\left(\bm{\upmu}_q(\cdot;\mathcal{D});\mathbf{x}_k,k\right)]^{\sf T} 
\end{align}

Only the first optimal control action is applied to the plant at time $t_k$ before the same optimization problem is solved at time $t_{k+1}$ with a new state measurement $\mathbf{x}_{k+1}$. This procedure implicitly defines the following feedback control law, which needs to be repeatedly solved for each new measurement $\mathbf{x}_k$:
\begin{align} \label{eq:GP_control_policy}
\kappa(\bm{\upmu}_q(\cdot;\mathcal{D});\mathbf{x}_k,k) = \hat{\mathbf{u}}^*_k\left(\bm{\upmu}_q(\cdot;\mathcal{D});\mathbf{x}_k,k\right) 
\end{align}
It is explicitly denoted that the control actions depend on the GP hybrid model used. 

\begin{remark}[Full state feedback]
Note in the control algorithm we have assumed full state feedback, i.e. it is assumed that the full state can be measured without noise. This assumption can be dropped if required by introducing a suitable observer and introduced in the closed-loop simulations to account for this additional uncertainty.  
\end{remark}

\subsection{Closed-loop Monte Carlo sample} \label{sec:MC_sample}
In Equation \ref{eq:GP_control_policy} the control policy is stated, which is obtained by repeatedly solving the optimization problem in Equation \ref{eq:nominalMPC} with updated initial conditions. GPs are distribution over functions and hence a GP sample describes a \textit{deterministic} function. An example of this can be seen in Figure \ref{fig:Prior_posterior}, in which several GP samples are depicted. In this section we outline how MC samples of GPs can be obtained for a \textit{finite time horizon}, which each describe separate state trajectories according to Equation \ref{eq:f_x}. These are then exploited in the next section to tighten the constraints defined in the previous section. In general exact GP realizations cannot be obtained by any known approach, since generating such a sample would require sampling an infinite dimensional stochastic process. Instead, approximate approaches have been applied, such as spectral sampling \cite{Bradford2018d}. Exact samples of GP are however possible if the GP only needs to be evaluated at a \textit{finite number} of points. This is for example the case for discrete-time GP state space models as proposed in \cite{Conti2009,Umlauft2018}. We next outline this technique and show how this can be extended to the continuous-time case for hybrid GP models, in which discretization is applied. 

Assume we are given a state space model defined as in Section \ref{sec:prob_def} in Equation \ref{eq:f_x}, and a fitted GP model for $\mathbf{q}(\cdot)$ determined from Section \ref{sec:hybrid_GP_training}. The predictive distribution given the available data $\mathcal{D}=(\mathbf{Z},\mathbf{Y})$ is then given by Equation \ref{eq:GP_pred}. The aim here is to show how to obtain a sample of the state sequence, which can be repeated multiple times to obtain multiple possible state sequences. The initial condition $\mathbf{x}_0$ follows a known Gaussian distribution as defined in Equation \ref{eq:f_x}. Let $\bm{\mathcal{X}}^{(s)}=[\bm{\upchi}_0^{(s)},\ldots,\bm{\upchi}_T^{(s)}]^{\sf T}$ represent the state sequence of a GP realization $s$ and  $\bm{\upchi}_k^{(s)}$ the state of this realization at time $t=t_k$. Further, let the corresponding control actions at time $t=t_k$ be denoted by $\mathpzc{u}_k^{(s)}$. The control actions are assumed to be the result of the GP nominal NMPC feedback policy defined in Equation \ref{eq:GP_control_policy} and hence can be stated as:
\begin{equation}
   \mathpzc{u}_k^{(s)} =  \kappa(\bm{\upmu}_q(\cdot;\mathcal{D});\bm{\upchi}_k^{(s)},k)
\end{equation}

We denote the control actions over the time horizon $T$ jointly as $\mathcal{U}^{(s)}=[\mathpzc{u}_0^{(s)},\ldots,\mathpzc{u}_{T-1}^{(s)}]^{\sf T}=[\kappa(\bm{\upmu}_q(\cdot;\mathcal{D});\bm{\upchi}_{0}^{(s)},0),\ldots,\kappa(\bm{\upmu}_q(\cdot;\mathcal{D});\bm{\upchi}_{T-1}^{(s)},T-1)]$, which are different for each MC sample $s$ due to feedback.  

To obtain a sample of a state sequence we first need to sample the initial state $\mathbf{x}_0 \sim \mathcal{N}(\bm{\upmu}_{\bm{x}_0},\bm{\Sigma}_{\bm{x}_0})$ to attain the realization $\bm{\upchi}_0^{(s)}$. Thereafter, the next state is given by Equation \ref{eq:disc_x}, which is dependent on the fitted GP of $\mathbf{q}(\cdot)$. An exact approach to obtain an independent sample of a GP is as follows. Any time the GP needs to be evaluated at a certain point, the response at this point $\mathbf{q}(\cdot)$ is sampled according to the predictive distribution in Equation \ref{eq:GP_pred}. This sampled point is then part of the sampled function path, and hence the GP needs to be conditioned on it. This necessitates to treat this point as a noiseless pseudo \textit{training} point without changing the hyperparameters. Note if the sample path would return to the same evaluation point, it would then lead to the same output due to this conditioning procedure. Consequently, the sampled function is deterministic as expected.   

Furthermore, we also need to sample $\bm{\upomega}_k \sim \mathcal{N}(\mathbf{0},\bm{\Sigma}_{\bm{\upomega}})$ for each $k$. We refer to these realizations as $\mathpzc{w}_k^{(s)}$. We assume Equation \ref{eq:disc_x} has been adequately discretized, such that the GP of $\mathbf{q}(\cdot)$ needs to be evaluated at only a finite number of points. The state sequence for realization $s$ can then be given as follows:
\begin{equation}
    \bm{\upchi}_{k+1}^{(s)} = \hat{\mathbf{F}}(\bm{\upchi}_{k}^{(s)},\mathpzc{u}_k^{(s)},\mathcal{Q}_k^{(s)}) + \mathpzc{w}_k^{(s)} \quad \forall k \in \{1,\ldots,T\} 
\end{equation}
where $\mathcal{Q}_k^{(s)}$ are discretization points sampled from the GP following the procedure outlined above and $\hat{\mathbf{F}}(\cdot)$ represents the discretized version of Equation \ref{eq:disc_x}. 

\begin{example} 
We give an example here for the procedure above exploiting the trapezium rule for $\hat{\mathbf{F}}(\bm{\upchi}_{k}^{(s)},\mathpzc{u}_k^{(s)},\mathcal{Q}_k^{(s)})$. Note the covariance matrix and dataset of the GPs are updated recursively. Assume we are at time $k$ for MC sample $s$, and the covariance matrix of $q_i(\cdot)$ are given by $\bm{\Sigma}_{\mathbf{Q}_{ik}}^{(s)}$ with the updated data set $\mathcal{D}_k^{(s)}=(\mathbf{Z}_k^{(s)},\mathbf{Q}_k^{*(s)})$, where $\mathbf{Q}_k^{*(s)} = [\mathbf{Q}_{1k}^{*(s)},\ldots,\mathbf{Q}_{n_{\mathbf{q}}k}^{*(s)}]$ and $\mathbf{Z}_k^{(s)}=[\mathbf{z}^{(s1)},\ldots,\mathbf{z}^{(sN^k)}]^{\sf T}$ as in Section \ref{sec:hybrid_GP_training}. The dataset size $N^k = N + (k-1) \times d_s$, since at each time step $k$, $d_s$ discretization points are added to the dataset. 

Let the number of discretization points per time interval be given by $d_s$, for the trapezium rule $d_s=2$. The discretization points then follow the following distribution:
\begin{equation}
    \hat{\mathbf{Q}}_{ik}^{(s)} \in \mathbb{R}^{d_s} \sim \mathcal{N}(\bm{\upmu}_{\hat{\mathbf{Q}}_{ik}^{(s)}}|\mathbf{Q}_{ik}^{(s)}, \bm{\Sigma}_{\hat{\mathbf{Q}}_{ik}^{(s)}}|\mathbf{Q}_{ik}^{(s)})
\end{equation}
where $\bm{\upmu}_{\hat{\mathbf{Q}}_{ik}^{(s)}}|\mathbf{Q}_{ik}^{(s)} := \bm{\Sigma}_{\hat{\mathbf{Q}}_{ik}^{(s)},\mathbf{Q}_{ik}^{(s)}}^{\sf T} \bm{\Sigma}_{\hat{\mathbf{Q}}_{ik}^{(s)}}^{-1} \mathbf{Q}_{ik}^{(s)}$ and $\bm{\Sigma}_{\hat{\mathbf{Q}}_{ik}^{(s)}}|\mathbf{Q}_{ik}^{(s)} = \bm{\Sigma}_{\hat{\mathbf{Q}}_{ik}^{(s)}} - \bm{\Sigma}_{\hat{\mathbf{Q}}_{ik}^{(s)},\mathbf{Q}_{ik}^{(s)}}^{\sf T} \bm{\Sigma}_{\mathbf{Q}_{ik}^{(s)}} \bm{\Sigma}_{\hat{\mathbf{Q}}_{ik}^{(s)},\mathbf{Q}_{ik}^{(s)}}$, $[\bm{\Sigma}_{\hat{\mathbf{Q}}_{ik}^{(s)}}]_{lm} = k(\mathbf{q}_{i}^{in}(\hat{\mathbf{z}}^{(sl)}_k),\mathbf{q}_{i}^{in}(\hat{\mathbf{z}}^{(sm)}_k))$ for each pair $(l,m)\in\{ 1,\ldots,d_s\}^2$ and $[\bm{\Sigma}_{\hat{\mathbf{Q}}_{ik}^{(s)},\mathbf{Q}_{ik}^{(s)}}]_{lm} = k(\mathbf{q}_i^{in}(\hat{\mathbf{z}}^{(sl)}_k),\mathbf{q}_i^{in}(\mathbf{z}^{(sm)}_k))$ for each pair $(l,m)\in \{ 1,\ldots,N^k\} \times \{ 1,\ldots,d_s\}$.   

Firstly, we sample $d_s$ independent standard normally distributed $\bm{\upxi}_i \in \mathbb{R}^{n_{\mathbf{q}}} \sim \mathcal{N}(\mathbf{0},\mathbf{I})$ for each GP $i$. The sampled discretization points can then be expressed by:
\begin{equation} \label{eq:sample_points_q}
    \mathcal{Q}_{ik}^{(s)} = \bm{\upmu}_{\hat{\mathbf{Q}}_{ik}^{(s)}}|\mathbf{Q}_{ik}^{(s)} + \bm{\upxi}_i \cdot \bm{\Sigma}^{\frac{1}{2}}_{\hat{\mathbf{Q}}_{ik}^{(s)}}|\mathbf{Q}_{ik}^{(s)})
\end{equation}
where $\mathcal{Q}_k^{(s)} \in \mathbb{R}^{d_s} \sim \mathcal{N}(\bm{\upmu}_{\hat{\mathbf{Q}}_{ik}^{(s)}}|\mathbf{Q}_{ik}^{(s)},\bm{\Sigma}_{\hat{\mathbf{Q}}_{ik}^{(s)}}|\mathbf{Q}_{ik}^{(s)})$.

Once $\mathcal{Q}_k^{(s)}$ has been sampled we arrive at the next state $k+1$ for the MC as follows:
\begin{equation} \label{eq:example_disc2}
    \bm{\upchi}^{(s)}_{k+1} = \bm{\upchi}_{k}^{(s)} + 0.5 \delta_t \left(\mathbf{f}(\hat{\upchi}_{1}^{(j)},\mathbf{u}_k^{(j)},\mathcal{Q}_{1k}^{(s)})+\mathbf{f}(\hat{\upchi}_{2}^{(j)},\mathbf{u}_k^{(j)},\mathcal{Q}_{2k}^{(s)})\right)  
\end{equation}
where $\hat{\upchi}_{1}^{(j)}=\bm{\upchi}_{k}^{(s)}$ and $\hat{\upchi}_{2}^{(j)}=\bm{\upchi}^{(s)}_{k+1}$, since for the trapezium rule the discretization points coincide with the initial and the end-points, which is not true for other discretization rules. The value of the inputs for the discredizations points are consequently given by $\hat{\mathbf{z}}^{(sl)}_k = (\hat{\upchi}_{l}^{(s)},\mathpzc{u}_k^{(s)})$. For implicit integration rules as the one above a Newton solver needs to be employed using Equations \ref{eq:sample_points_q} and \ref{eq:example_disc2}. Equation \ref{eq:sample_points_q} is required due to the dependency of $\hat{\mathbf{z}}^{(sl)}_k$ on $\hat{\upchi}_{l}^{(j)}$.  

Lastly, the data matrices for the particular MC sample need to be updated as follows:
\begin{subequations}
\begin{align}
    & \mathbf{Z}_{k+1}^{(s)}                = [\mathbf{Z}_k^{(s)},\hat{\mathbf{z}}^{(s 1)}_k,\ldots,\hat{\mathbf{z}}^{(s d_s)}_k]^{\sf T} \\
    & \mathbf{Q}^{*(s)}_{k+1} = [\mathbf{Q}^{*(s)\sf T}_{k},\mathcal{Q}_k^{(s)\sf T}]^{\sf T} \\
    & [\bm{\Sigma}_{\mathbf{Q}_{ik}}]_{lm} = k_i(\mathbf{q}_i^{in}(\mathbf{z}^{(sl)}_k),\mathbf{q}_i^{in}(\mathbf{z}^{(sm)}_k)) + \sigma^2_{\nu i} \delta_{l,m} \quad \forall (l,m)\in\{ 1,\ldots,N^{k+1}\}^2
\end{align}
\end{subequations}
\end{example}

Repeating this procedure multiple times then gives us multiple MC samples of the state sequence $\bm{\mathcal{X}}^{(s)}$. The aim then is to use the information obtained from these sequences to iteratively tighten the constraints for the GP NMPC problem in Equation \ref{eq:nominalMPC} to obtain the probabilistic constraint satisfaction required from the initial problem definition in Section \ref{sec:prob_def}. 
 
\subsection{Probabilistic constraint tightening}
This section outlines how to systemically tighten the constraints based on MC samples using the procedure outlined in the previous chapter. Firstly define the function $C(\cdot)$, which is a single-variate random variable that represents the satisfaction of the joint chance constraints:
\begin{subequations} \label{eq:c_def}
\begin{align}
    & C(\mathbf{X}) = \inf_{(j,k) \in \{1,\ldots,n_g\} \times \{0,\ldots,T\}} {g_j^{(k)}(\mathbf{x}_k)} \\
    & F_{C(\mathbf{X})} = \mathbb{P}\left\{C(\mathbf{X}) \leq 0\right\} = \mathbb{P} \left\{ \bigcap^T_{k=0} \{ \mathbf{x}_k \in \mathbb{X}_k \}    \right\}
\end{align}
\end{subequations}
where $\mathbf{X} = [\mathbf{x}_0,\ldots,\mathbf{x}_T]^{\sf T}$ defines a state sequence, and $\mathbb{X}_k = \{ \mathbf{x} \in \mathbb{R}^{n_{\mathbf{x}}} \mid g_j^{(k)}(\mathbf{x}) \leq 0, j=1,\ldots,n_g \}$. 

The evaluation of the probability in Equation \ref{eq:c_def} is generally intractable, and instead a non-parametric sample approximation is applied, known as the \textit{empirical cumulative distribution function} (ecdf). Assuming we are given $S$ MC samples of the state trajectory $\mathbf{X}$ and hence of $C(\mathbf{X})$, the ecdf estimate of the probability in Equation \ref{eq:c_def} can be defined as follows:
\begin{equation} \label{eq:approx_joint}
    F_{C(\mathbf{X})} \approx \hat{F}_{C(\mathbf{X})} = \frac{1}{S} \sum_{s=1}^{S} \mathbf{1}\{C(\bm{\mathcal{X}}^{(s)}) \leq 0\} 
\end{equation}
where $\bm{\mathcal{X}}^{(s)}$ is the $s$-th MC sample and $\hat{F}_{C(\mathbf{X})}$ is the ecdf approximation of the true probability $F_{C(\mathbf{X})}$. 

The accuracy of the ecdf in Equation \ref{eq:approx_joint} significantly depends on the number of samples used and it is therefore paramount to account for the residual uncertainty of this sample approximation. This problem has been previously studied in statistics, for which the following probabilistic lower bound has been proposed known as ``\textit{exact confidence bound}'' \cite{Clopper1934}:
\begin{theorem}[Confidence interval for empirical cumulative distribution function] Assume we are given a value of the ecdf, $\hat{\beta} = \hat{F}_{C(\mathbf{X})}$, as defined in Equation \ref{eq:approx_joint} based on $S$ independent samples of $C(\mathbf{X})$, then the true value of the cdf, $\beta = F_{C(\mathbf{X})}$, as defined in Equation \ref{eq:c_def} has the following lower confidence bounds:
\begin{align} \label{eq:beta_lower_bound}
    & \mathbb{P}\left\{ \beta \geq \hat{\beta}_{lb} \right\} \geq 1-\alpha, && \hat{\beta}_{lb} = \textup{betainv}\left(\alpha,S + 1 - S \hat{\beta}, S \hat{\beta} \right)
\end{align}
\end{theorem}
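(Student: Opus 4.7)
The plan is to recognize the statement as the classical Clopper--Pearson binomial confidence interval applied to the constraint-satisfaction indicator. First I would observe that each summand $Y_s := \mathbf{1}\{C(\bm{\mathcal{X}}^{(s)}) \leq 0\}$ in Equation \ref{eq:approx_joint} is, by the definition of $\beta = F_{C(\mathbf{X})}$ in Equation \ref{eq:c_def}, a Bernoulli random variable with success probability exactly $\beta$. Since the $S$ Monte Carlo rollouts constructed in Section \ref{sec:MC_sample} are generated with independent draws of $\mathbf{x}_0$, $\bm{\upomega}_k$, and the GP sample paths, the $Y_s$ are i.i.d., so the count $K := S\hat{\beta} = \sum_{s=1}^{S} Y_s$ is $\mathrm{Binomial}(S,\beta)$.

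The second ingredient is the standard binomial--Beta duality, namely that for every $k \in \{1,\ldots,S\}$ and every $p \in (0,1)$,
\begin{equation*}
\mathbb{P}\{K \geq k \mid \beta = p\} \;=\; I_{p}(k,\,S-k+1),
\end{equation*}
where $I_{p}(a,b)$ is the regularized incomplete Beta function, i.e.\ the cdf of the $\mathrm{Beta}(a,b)$ distribution evaluated at $p$. I would prove (or simply cite) this identity via repeated integration by parts of the Beta density, or equivalently via the uniform-order-statistic representation of Beta quantiles.

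The Clopper--Pearson lower bound then arises by inverting this duality in $p$. Since $p \mapsto I_{p}(k,S-k+1)$ is continuous and strictly increasing on $(0,1)$ for each $k \geq 1$, the quantity $\hat{\beta}_{lb}$ can be defined as the unique $p$ satisfying $I_{p}(K,S-K+1) = \alpha$, i.e.\ $\hat{\beta}_{lb} = \mathrm{betainv}(\alpha, K, S-K+1)$. Monotonicity yields $\{\beta < \hat{\beta}_{lb}\} = \{I_{\beta}(K,S-K+1) < \alpha\} = \{\mathbb{P}[K' \geq K \mid \beta] < \alpha\}$ for an independent copy $K' \sim \mathrm{Binomial}(S,\beta)$, and a probability-integral-transform argument applied to the discrete statistic $K$ then gives $\mathbb{P}\{\beta < \hat{\beta}_{lb}\} \leq \alpha$, i.e.\ the claimed coverage $\mathbb{P}\{\beta \geq \hat{\beta}_{lb}\} \geq 1-\alpha$. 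The corner case $K=0$ is handled by the convention $\hat{\beta}_{lb}=0$, and the specific parameterisation $\mathrm{betainv}(\alpha, S+1-S\hat{\beta}, S\hat{\beta})$ displayed in the theorem is recovered using the symmetry $\mathrm{betainv}(\alpha,a,b) = 1 - \mathrm{betainv}(1-\alpha,b,a)$ together with the complementary relation $\mathbb{P}\{K \leq k\} = 1 - \mathbb{P}\{K \geq k+1\}$, depending on whether one parameterises by the satisfaction count or the violation count.

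The hard part will be the final step: carefully treating the discreteness of $K$ via the PIT so that the coverage guarantee holds \emph{uniformly} in the unknown $\beta$ (which is what forces the interval to be conservative rather than exact) and then matching the resulting Beta-parameter convention to the precise form written in the statement. The remaining ingredients, independence of the rollouts and the binomial--Beta identity, are essentially bookkeeping once the connection to a Bernoulli model is made.
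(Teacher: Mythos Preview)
Your proposal is correct and follows precisely the Clopper--Pearson route that the paper has in mind; the paper itself does not give a proof but simply cites the classical result \cite{Clopper1934,Streif2014}, so your outline in fact supplies the argument that the paper omits. Your identification of the Bernoulli structure of the indicators, the binomial--Beta tail identity, and the monotone inversion are exactly the standard ingredients, and your caveat about reconciling the Beta-parameter ordering with the displayed formula is well placed, since the expression $\textup{betainv}(\alpha,\,S+1-S\hat{\beta},\,S\hat{\beta})$ in the statement indeed corresponds to the usual lower bound $\textup{betainv}(\alpha,\,S\hat{\beta},\,S+1-S\hat{\beta})$ only up to such a convention or symmetry argument.
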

\begin{proof}
The proof uses standard results in statistics and can be found in \cite{Clopper1934,Streif2014}. \qed
\end{proof}

In other words the probability that the probability defined in Equation \ref{eq:c_def}, $\beta$, exceeds $\hat{\beta}_{lb}$ is greater than $1-\alpha$. Consequently, for small $\alpha$ $\hat{\beta}_{lb}$ can be seen as a conservative lower bound of the true probability $\beta$ accounting for the statistical error introduced through the \textit{finite} sample approximation. Based on the definition of $C(\mathbf{X})$ and the availability of $S$ closed-loop MC simulations of the state sequence $\mathbf{X}$, assume we are given a value for $\hat{\beta}_{lb}$ according to Equation \ref{eq:beta_lower_bound} with a confidence level of $1-\alpha$, then the following Corollary holds:  
\begin{corollary}[Feasibility probability]
Assuming the stochastic system in Equation \ref{eq:f_x} is a correct description of the uncertainty of the system including the fitted GP and ignoring possible inaccuracies due to discretization errors, and given a value of the lower bound $\hat{\beta}_{lb} \geq 1-\epsilon$ defined in Equation \ref{eq:beta_lower_bound} with a confidence level of $1-\alpha$, then the original chance constraint in Equation \ref{eq:xcon} holds true with a probability of at least $1-\alpha$.  
\end{corollary}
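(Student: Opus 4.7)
The plan is to unpack the definitions and then invoke the preceding Theorem directly; essentially nothing new has to be argued beyond chaining three implications. The only subtlety is keeping track of which quantities are deterministic and which are random, and which probability measure the outer ``$1-\alpha$'' refers to.

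First I would rewrite the original joint chance constraint in terms of the scalar random variable $C(\mathbf{X})$ introduced in Equation~\ref{eq:c_def}. By construction $\{C(\mathbf{X})\leq 0\} = \bigcap_{k=0}^{T}\{\mathbf{x}_k\in\mathbb{X}_k\}$, so $\beta := F_{C(\mathbf{X})}$ equals the probability appearing on the left-hand side of Equation~\ref{eq:xcon}. Under the stated correctness assumption on the stochastic model (and discarding discretization error), the closed-loop samples $\bm{\mathcal{X}}^{(s)}$ produced by the procedure of Section~\ref{sec:MC_sample} under the feedback law $\kappa(\bm{\upmu}_q(\cdot;\mathcal{D});\cdot,\cdot)$ are i.i.d.\ draws from the true distribution of $\mathbf{X}$, hence the $\mathbf{1}\{C(\bm{\mathcal{X}}^{(s)})\leq 0\}$ are i.i.d.\ Bernoulli$(\beta)$. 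The ecdf $\hat{\beta}=\hat{F}_{C(\mathbf{X})}$ from Equation~\ref{eq:approx_joint} is therefore the standard binomial estimator of $\beta$, so the Theorem applies verbatim.

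Next I would apply the Theorem with this $\beta$ and $\hat{\beta}$ to obtain
\begin{equation*}
\mathbb{P}_{S}\!\left\{\beta \geq \hat{\beta}_{lb}\right\} \geq 1-\alpha,
\end{equation*}
where $\mathbb{P}_S$ denotes the outer probability over the $S$ independent MC samples used offline to compute $\hat{\beta}_{lb}$. Under the hypothesis $\hat{\beta}_{lb}\geq 1-\epsilon$, the event $\{\beta\geq\hat{\beta}_{lb}\}$ is contained in the event $\{\beta\geq 1-\epsilon\}$, and the latter is exactly ``the original chance constraint in Equation~\ref{eq:xcon} holds''. Monotonicity of probability then yields $\mathbb{P}_S\{\beta\geq 1-\epsilon\}\geq 1-\alpha$, which is the claim.

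The main obstacle, such as it is, will be expositional rather than technical: making clear that $\beta$ itself is a deterministic number (a property of the true system together with the fixed nominal feedback law $\kappa$), while $\hat{\beta}_{lb}$ is a random variable depending on the offline sample draws, and that the ``probability $1-\alpha$'' in the conclusion is with respect to those offline draws, not an additional online realization. I would also note explicitly where the modelling hypothesis is used, namely in identifying the law of the MC samples $\bm{\mathcal{X}}^{(s)}$ with the law of the true closed-loop trajectory so that $\hat{\beta}$ is an unbiased binomial estimator of $\beta$; without this identification the Clopper--Pearson bound does not transfer from $\hat{\beta}$ to the true constraint-satisfaction probability.
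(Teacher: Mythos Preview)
Your proposal is correct and follows essentially the same approach as the paper's own proof: identify $\beta$ with the joint chance-constraint probability via $C(\mathbf{X})$, use the modelling assumption so that the MC samples are valid i.i.d.\ draws, apply Theorem~1 to get $\mathbb{P}\{\beta\geq\hat{\beta}_{lb}\}\geq 1-\alpha$, and then use $\hat{\beta}_{lb}\geq 1-\epsilon$ with monotonicity. Your version is somewhat more explicit than the paper's about which quantities are random and what probability space the outer $1-\alpha$ lives on, but the logical skeleton is identical.
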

\begin{proof}
The realizations of possible state sequences described in Section \ref{sec:MC_sample} are exact within an arbitrary small discretization error and therefore these $S$ independent state trajectories $\bm{\mathcal{X}}$ provide a valid lower bound $\hat{\beta}_{lb}$ from Equation \ref{eq:beta_lower_bound} to the true cdf value $\beta$. If $\hat{\beta}_{lb}$ is greater than or equal to $1-\epsilon$, then the following probabilistic bound holds on the true cdf value $\beta$ according to Theorem 1: $\mathbb{P}\left\{\beta \geq \hat{\beta}_{lb} \geq  1 - \epsilon \right\} \geq 1-\alpha$, which in other words means that $\beta = \mathbb{P}\left\{ C(\mathbf{X} \leq 0) \right\} \geq 1-\epsilon$ with a probability of at least $1-\alpha$. \qed 
\end{proof}    
     
Now assume we want to determine back-off values for the nominal GP NMPC algorithm in Equation \ref{eq:nominalMPC}, such that $\beta_{lb}$ is equal to $1-\epsilon$ for a chosen confidence level $1-\alpha$. This then in turn guarantees the satisfaction of the original chance constraint with a probability of at least $1-\alpha$. The update rule to accomplish this has two steps: Firstly an approximate constraint set is defined and secondly this set is iteratively adjusted. The approximate constraint set should reflect the difference of the constraint values for the state sequence of the nominal MPC model and the constraint values of possible state sequence realizations of the \textit{real} system in Equation \ref{eq:f_x}. The back-offs are first set to zero and $S$ MC samples are run according to Section \ref{sec:MC_sample}. Now assume we aim to obtain back-off values that imply satisfaction of individual chance constraints as follows to attain an approximate initial constraint set:    
\begin{align} \label{eq:chance_implication}
    & g_j^{(k)}(\overline{\bm{\upchi}}_k) + b_j^{(k)} = 0 \implies \mathbb{P}\left\{g_j^{(k)}(\bm{\upchi}_k) \leq 0 \right\} \geq 1 - \delta 
\end{align}
where $\delta$ is a tuning parameter and should be set to a reasonably low value and $\overline{\bm{\upchi}}_k$ refers to states according to the \textit{nominal} trajectory as defined in Section \ref{sec:MC_sample}.     

It is proposed in \cite{Paulson2018} to exploit the inverse ecdf to fulfill the requirement given in Equation (\ref{eq:chance_implication}) using the $S$ MC samples available. The back-offs can then be stated as:
\begin{equation} \label{eq:update_rule}
\tilde{b}_j^{(k)} = \hat{F}_{g_j^{(k)}}^{-1}(1-\delta) - g_j^{(k)}(\overline{\bm{\upchi}}_k) \quad \forall (j,k) \in \{1,\ldots,n_g^{(k)}\} \times \{1,\ldots,T\} 
\end{equation}
where $\hat{F}_{g_j^{(t)}}^{-1}$ denotes the inverse of the ecdf given in Equation \ref{eq:approx_joint} and $\tilde{b}_j^{(t)}$ refers to these initial back-off values. The inverse of an ecdf can be determined by the quantile values of the $S$ constraint values from the MC samples with cut-off probability $1-\delta$.     

This first step gives us an initial constraint set that depends on the difference between the nominal prediction $\overline{\bm{\upchi}}_k$ as used in the MPC and possible state sequences according to the MC simulations. The parameter $\delta$ in this case is only a tuning parameter to obtain the initial back-off values. 

In the next step these back-off values are modified using a \textit{back-off factor} $\gamma$:
\begin{equation} \label{eq:back_off_factor_def}
   b_j^{(k)} = \gamma \tilde{b}_j^{(k)} \quad \forall (j,k) \in \{1,\ldots,n_g^{(k)}\} \times \{1,\ldots,T\}
\end{equation}

A value of $\gamma$ is sought for which the lower bound $\beta_{lb}$ is equal to $1-\epsilon$ to obtain the required chance constraint satisfaction in Equation \ref{eq:xcon}, which can be formulated as a root finding problem:
\begin{equation} \label{eq:root}
    h(\gamma) = \hat{\beta}_{lb}(\gamma) - (1 - \epsilon)
\end{equation}
where the aim is to determine a value of $\gamma$, such that $h(\gamma)$ is approximately zero. $\hat{\beta}_{lb}(\gamma)$ refers to the implicit dependence of $\hat{\beta}_{lb}$ on the $S$ MC simulations resulting from the tightened constraints of the nominal GP NMPC algorithm according to Equation \ref{eq:back_off_factor_def}.

In other words the back-off values of the NMPC are adjusted until they return the required chance constraint satisfaction in Equation \ref{eq:xcon}. To drive $h(\gamma)$ to zero we employ the $\textit{bisection technique}$ \cite{Beers2007}, which seeks the root of a function in an interval $a_{\gamma}$ and $b_{\gamma}$, such that $h(a_{\gamma})$ and $h(b_{\gamma})$ have opposite signs. It is expected that a too high value of the back-off factor leads to a highly conservative solution with a positive sign of $h(b_{\gamma})$, while a low value of the back-off factor often results in negative values of $h(b_{\gamma})$. In our algorithm the initial $a_{\gamma}$ is set to zero to evaluate $\tilde{b}_j^{(k)}$ in the first step. The \textit{bisection method} repeatedly bisects the interval, in which the root is contained. The output of the algorithm are the required back-offs in $n_b$ back-off iterations. The overall procedure to attain the back-offs in Algorithm \ref{alg:back_off_algorithm}.

\begin{algorithm2e}[H] \label{alg:back_off_algorithm}
 \caption{Back-off iterative updates}
\Input{$\bm{\upmu}_{\mathbf{x}_0}$, $\bm{\Sigma}_{\mathbf{x}_0}$, $\bm{\upmu}_q(\mathbf{z};\mathcal{D})$, $\bm{\Sigma}_q(\mathbf{z};\mathcal{D})$, $\mathcal{D}$, $T$, $V_T(\mathbf{x},k,\hat{\mathbf{U}}_{k:T-1})$, $\mathbb{X}_k$, $\mathbb{U}_k$, $\epsilon$, $\alpha$, $\delta$, learning, S, $n_b$}

\Initialize{Set all $b_j^{(k)}=0$ and $\delta$ to some reasonable value, set $a_{\gamma}=0$ and $b_{\gamma}$ to some reasonably high value, such that $b_{\gamma} - (1-\epsilon)$ has a positive sign.}

\For{$n_b$ back-off iterations}{
\uIf{$n_b > 0$}{$c_{\gamma} := (a_{\gamma} + b_{\gamma})/2$ \\ $b_j^{(t)} := c_{\gamma} \tilde{b}_j^{(t)} \quad (j,t) \in \{1,\ldots,n_g^{(t)}\} \times \{1,\ldots,T\}$} 

\vspace{5pt}
Define GP NMPC in Equation \ref{eq:nominalMPC} with back-offs $b_j^{(t)}$ \\
Run $S$ MC simulations to obtain $\bm{\mathcal{X}}^{(s)}$ using the GP NMPC policy with updated back-offs  \vspace{5pt}

$\hat{\beta} := \hat{F}_{C(\bm{\mathcal{X}}^{(s)})} = \frac{1}{S} \sum_{s=1}^{S} \mathbf{1}\{C(\bm{\mathcal{X}}^{(s)}) \leq 0\}$  \\
$\hat{\beta}_{lb} := \text{betainv}\left(\alpha,S + 1 - S \hat{\beta}, S \hat{\beta} \right)$ 

\uIf{$nb = 0$}{
$\tilde{b}_j^{(t)} = \hat{F}_{g_j^{(t)}}^{-1}(\delta) - g_j^{(t)}(\overline{\bm{\upchi}}_t) \, \forall (j,t) \in \{1,\ldots,n_g^{(t)}\} \times \{1,\ldots,T\}$ \\
$\hat{\beta}_{lb}^{a_{\gamma}} := \hat{\beta}_{lb} - (1-\epsilon) $} 

\Else{$\hat{\beta}_{lb}^{c_{\gamma}} := \hat{\beta}_{lb} - (1-\epsilon)$ \\ 
\uIf{ $\text{sign}(\hat{\beta}_{lb}^{c_{\gamma}}) = \text{sign}(\hat{\beta}_{lb}^{a_{\gamma}})$}{$a_{\gamma} := c_{\gamma}$ \\
$\hat{\beta}_{lb}^{a_{\gamma}} := \hat{\beta}_{lb}^{c_{\gamma}}$}
\Else{$b_{\gamma} := c_{\gamma}$}}}

\Output{$b_j^{(t)} \quad \forall (j,t) \in \{1,\ldots,n_g^{(t)}\} \times \{1,\ldots,T\}, \, \hat{\beta}_{lb}$}
\end{algorithm2e}

\subsection{Algorithm}
A summary of the overall algorithm proposed in this paper is given in this section. As first step the problem needs to be specified following the problem definition in Section \ref{sec:prob_def}. From the available data the GP hybrid model needs to be trained as outlined in Section \ref{sec:hybrid_GP_training}. Thereafter, the back-offs are determined \textit{offline} iteratively following Algorithm \ref{alg:back_off_algorithm}. These back-offs then define the tightened constraint set for the GP NMPC feedback policy \textit{online}, which is run online to solve the problem initially outlined. An overall summary can be found in Algorithm \ref{alg:algorithm_summary}. 

\begin{algorithm2e}[H] \label{alg:algorithm_summary}
 \caption{Back-off GP NMPC}
 \textit{Offline Computations}
 \begin{enumerate}
\item{Build GP hybrid model from data-set $\mathcal{D}=(\mathbf{Z},\mathbf{Y})$ as shown in Section \ref{sec:hybrid_GP_training}.}
\item{Choose time horizon $T$, initial condition mean $\bm{\upmu}_{\mathbf{x}_0}$ and covariance $\bm{\Sigma}_{\mathbf{x}_0}$, measurement covariance matrix $\bm{\Sigma}_{\bm{\upnu}}$, disturbance covariance matrix $\bm{\Sigma}_{\bm{\upomega}}$, stage costs $\ell$ and $\ell_f$, constraint sets $\mathbb{X}_k, \mathbb{U}_k$ $\forall k \in \{1,\ldots,T\}$, chance constraint probability $\epsilon$, ecdf confidence $\alpha$, tuning parameter $\delta$, the number of back-off iterations $n_b$, and the number of Monte Carlo simulations $S$ to estimate the back-offs.}
\item{Determine explicit back-off constraints using Algorithm \ref{alg:back_off_algorithm}.}
\item{Check final probabilistic value $\hat{\beta}_{lb}$ from Algorithm \ref{alg:back_off_algorithm} if it is close enough to $\epsilon$.}
\end{enumerate} 
\textit{Online Computations} \\ \For{$k=0,\ldots,T-1$}{
\begin{enumerate}
\item{Solve the MPC problem in Equation \ref{eq:nominalMPC} with the tightened constraint set from the \textit{Offline Computations}.}
\item{Apply the first control input of the optimal solution to the \\ real plant.} 
\item{Measure the current state $\mathbf{x}_k$.}
\end{enumerate}}
\end{algorithm2e}

\section{Case study} \label{sec:case_study}
The case study is based on a semi-batch reaction for the production of fatty acid methyl ester (FAME) from microalgae, which is considered a promising renewable feedstock to meet the growing global energy demand. FAME is the final product of this process, which can be employed as biodiesel \cite{DelRioChanona2018}. We exploit a simplified dynamic model to verify the hybrid GP NMPC algorithm proposed in this paper. The GP NMPC has an economic objective, which is to maximize the FAME (biodiesel) concentration for the final batch product subject to two path constraints and one terminal constraint.   
 
\subsection{Semi-batch bioreactor model}
The simplified dynamic system consists of four ODEs describing the evolution of the concentration of biomass, nitrate, nitrogen quota, and FAME. We assume a fixed volume fed-batch. The balance equations can be stated as follows \cite{DelRioChanona2018}:   
\begin{align} \label{eq:case_study_ode}
    & \frac{dC_X}{dt} = 2\mu_m(I_0,C_X)\left(1 - \frac{k_q}{q}\right) \left(\frac{N}{N+K_N}\right)  C_X - \mu_d C_X, \, C_X(0) = {C_X}_0            \nonumber \\
    & \frac{dC_N}{dt} = -\mu_N \left(\frac{C_N}{C_N+K_N}\right) C_X + F_N, \quad C_N(0) = {C_N}_0 \\
    & \frac{dq}{dt} = \mu_N \left(\frac{C_N}{C_N+K_N}\right) - \mu_m(I_0,C_X) \left(1 - \frac{k_q}{q}\right) q, \quad q(0) = q_0                \nonumber \\
    & \frac{d\textit{FA}}{dt} = \mu_m(I_0,C_X) (\theta' q - \epsilon' \textit{FA}) \left(1 - \frac{k_q}{q}\right)  \nonumber \\ \nonumber
    & \quad \quad - \gamma' \mu_N \left(\frac{C_N}{C_N+K_N}\right) C_X, \quad \textit{FA}(0) = \textit{FA}_0  
\end{align}
where $C_X$ is the concentration of biomass in gL$^{-1}$, $C_N$ is the nitrate concentration in mgL$^{-1}$, $q$ is the dimensionless intracellular nitrogen content (nitrogen quota), and $\textit{FA}$ is the concentration of FAME (biodiesel) in gL$^{-1}$. Control inputs are given by the incident light intensity ($I_0$) in $\upmu \text{mol.m}^{-2}$.s$^{-1}$ and nitrate inflow rate ($F_N$) in mg.L$^{-1}$.h$^{-1}$. The state vector is hence given by $\mathbf{x}=[C_X,C_N,q,\textit{FA}]^{\sf T}$ and the input vector by $\mathbf{u}=[I_0,F_N]^{\sf T}$. The corresponding initial state vector is given by $\mathbf{x}_0=[{C_X}_0,{C_N}_0,q_0,\textit{FA}_0]^{\sf T}$. The remaining parameters can be found in Table \ref{tab:par_casestudy} taken in part from \cite{DelRioChanona2018}.  

\begin{table}[H] 
\centering
\caption{Parameter values for ordinary differential equation system in Equation \ref{eq:case_study_ode}.}
\begin{tabular}{*{3}{l}} \label{tab:par_casestudy}
Parameter & Value & Units \\
\hline
$\mu_M$             & 0.359                & $\text{h}^{-1}$                           \\
$\mu_d$             & 0.004                & $\text{h}^{-1}$                           \\
$k_q$               & 1.963                & mg.g$^{-1}$                               \\
$\mu_N$             & 2.692                & mg.g$^{-1}$.h$^{-1}$                      \\
$K_N$               & 0.8                  & mg.L$^{-1}$                               \\
$k_s$               & 91.2                 & $\upmu \text{mol.m}^{-2}\text{.s}^{-1}$   \\
$k_i$               & 100.0                & $\upmu \text{mol.m}^{-2}\text{.s}^{-1}$   \\
$\alpha'$           & 196.4                & L.mg$^{-1}$.m$^{-1}$                      \\
$\theta'$           & 6.691                & -                                         \\
$\gamma'$           & 7.53 $\times 10^3$   & -                                         \\
$\epsilon'$         & 0.01                 & -                                         \\
$\tau'$             & 1.376                & -                                         \\
$\delta'$           & 9.904                & -                                         \\
$\phi'$             & 16.89                & -                                         \\
$\beta'$            & 0.0                  & m$^{-1}$                                  \\
$L$                 & 0.0044               & m               
\end{tabular}
\end{table}

The function $\mu_m(I_0,C_X)$ describes the complex effects of light intensity on the biomass growth, which we assume to be unknown in this study. This helps simplify the model significantly, since these effects are dependent on the distance from the light source and hence would lead to a partial differential equation (PDE) model if modelled by first principles. The actual function can be given as follows to obtain values to train the hybrid GP:
\begin{align}
    \mu_m(I_0,C_X) = \frac{\mu_M}{L}  \int_{z=0}^L \left( \frac{I(z,I_0,C_X)}{I(z,I_0,C_X)+k_s+\frac{I(z,I_0,C_X)^2}{k_i}} \right) dz 
\end{align}
where $I(z,I_0,C_X) = I_0 \exp\left(-(\alpha' C_X + \beta')z\right)$, $z$ is the distance from the light source in m, and $L$ is the reactor width. 

\subsection{Problem set-up}
The problem has a time horizon $T=12$ with a batch time of $480$h, and hence a sampling time of $40$h. Next we state the objective and constraint functions according to the general problem definition in Section \ref{sec:prob_def} based on the dynamic system in Equation \ref{eq:case_study_ode}. 

Measurement noise covariance matrix $\bm{\Sigma}_{\bm{\upnu}}$ and disturbance noise matrix $\bm{\Sigma}_{\bm{\upomega}}$ are defined as:
\begin{subequations}
\begin{align} \label{eq:noise_matrices_case}
    & \bm{\Sigma}_{\bm{\upnu}}    = 10^{-4} \times  \diag\left(2.5^2,800^2,500^2,3000^2\right)  \\
    & \bm{\Sigma}_{\bm{\upomega}} = 10^{-4} \times \diag\left(0.1^2,200^2,10^2,100^2\right)
\end{align}
\end{subequations}

The mean and covariance of the initial condition are set to:
\begin{equation}
    \bm{\upmu}_{\mathbf{x}_0} = [0.4,0,150,0]^{\sf T}, \bm{\Sigma}_{\mathbf{x}_0} = 10^{-3} \times \diag(0.2^2,0,100^2,0)
\end{equation}

The aim of the control problem is to maximize the amount of biodiesel in the final batch with a penalty  on the chance of control actions. The corresponding stage and terminal costs can be given as:
\begin{align}
    \ell(\mathbf{x}_t,\mathbf{u}_t) = \bm{\Delta}_{\mathbf{u}_t}^{\sf T} \mathbf{R} \bm{\Delta}_{\mathbf{u}_t}, \quad \ell_f(\mathbf{x}_T) = -\textit{FA}_T 
\end{align}
where $\bm{\Delta}_{\mathbf{u}_t} = \mathbf{u}_t -\mathbf{u}_{t-1}$ and $\mathbf{R}= 5 \times 10^{-3} \times \diag(1/400^2,1/40^2)$. The objective is then defined by Equation \ref{eq:objdef}.  

There are two path constraints. Firstly, the nitrate is constrained to be below $800$mg/L. Secondly, the ratio of nitrogen quota $q$ to biomass may not exceed 0.011 for high density biomass cultivation. These are then defined as:  
\begin{subequations}
\begin{align}
    & g_1^{(t)} = {C_N}_t - 800 \leq 0 && \forall t \in \{0,\ldots,T\} \\
    & g_2^{(t)} = q_t - 0.011 {C_X}_t \leq 0 && \forall t \in \{0,\ldots,T\}
\end{align}
\end{subequations}

Further, the nitrate should reach a concentration below $150$mg/L for the final batch. This constraints can be stated as:
\begin{equation}
    g_3^{(T)}(\mathbf{x}_T) = {C_N}_T - 200 \leq 0, \, g_3^{(t)}(\mathbf{x}_t) = 0 \, \forall t \in \{0,\ldots,T-1\} 
\end{equation}

The control inputs light intensity and nitrate inflow rate are subject to the following box constraints:
\begin{subequations}
\begin{align}
    & 120 \leq I_t     \leq 300 && \forall t \in \{0,\ldots,T\} \\
    & 0   \leq {F_N}_t \leq 10  && \forall t \in \{0,\ldots,T\}
\end{align}
\end{subequations}

The priors were set to the following values:
\begin{subequations}
\begin{align}
    & p(\mathbf{Q})       = \mathcal{N}(-6 \times \mathbf{1}, 50 \times \mathbf{I})     \\
    & p(\hat{\mathbf{Q}}) = \mathcal{N}(-6 \times \mathbf{1}, 50 \times \mathbf{I})     \\
    & p(\bm{\Psi})        = \mathcal{N}([\mathbf{0},5\times10^{-3}]^{\sf T},\diag(20 \times \mathbf{I},1\times10^{-6}))
\end{align}
\end{subequations}

Maximum probability of violation was to $\epsilon=0.1$. To compute the back-offs a total of $S=1000$ MC iterations are employed for each iteration according with $\delta=0.05$ and $\alpha=0.01$. The number of back-off iterations was set to $n_b = 14$.    

\subsection{Implementation and initial dataset generation}
The discretization rule used for the MAP fit, for the GP MC sample, and for the GP NMPC formulation exploits direct collocation with $4$th order polynomials with the Radau collocation points. The MAP optimization problem and the GP NMPC optimization problem are solved using Casadi \cite{Andersson2018} to obtain the gradients of the problem using automatic differentiation in conjunction with IPOPT \cite{Wachter2006}. IDAS \cite{Hindmarsh2005} is utilised to simulate the  "real" plant. The input dataset  $\mathbf{Z}$ was designed using the Sobol sequence \cite{Sobol2001} for the entire input data in the range $\mathbf{z}_i \in [0,3] \times [0,800] \times [0,600] \times [0,3500] \times [120,300] \times [0,10]$. The ranges were chosen for the data to cover the expected operating region. The outputs $\mathbf{Y}$ were then obtained from the IDAS simulation of the system perturbed by Gaussian noise as defined in the problem setup. 

\section{Results and discussions} \label{sec:results}
Firstly, the accuracy of the proposed hybrid GP model is verified by creating 1000 random datapoints. For these we calculate the absolute prediction error and the absolute error over the standard deviation, which gives an indication on the accuracy of the uncertainty measure provided by the GP. These results are summarized in Figure \ref{fig:Boxplot_hybrid}. For comparison purposes three cases of the GP NMPC algorithm are compared. Firstly, we run the above case study using 30 datapoints and 50 datapoints. In addition, we compare this with the previously proposed GP NMPC algorithm in \cite{Bradford2019e} that aims to model the dynamic state space equations using GPs using 50 datapoints. Lastly, these three cases are further compared to their \textit{nominal} variations, i.e. setting all back-offs in the formulations to zero. The results of these runs are highlighted in Figures \ref{fig:Con_MC_30_hybrid}-\ref{fig:Plant_constraint_plots_nominal} and in Table \ref{tab:probability_OCP_time}. From these results we can draw the following conclusions:
\begin{itemize}
    \item{From Figure 2 we can firstly see in the first graph that the median absolute error decreases significantly going from a dataset size of $30$ to $50$, which is as expected. Overall the hybrid model predictions seem reasonably well. The GP error measure can be tested by dividing the absolute error by the standard deviation, for which the vast majority of values should be within approximately a range of 0 to 3. A value above $3$ has a chance of $99.4\%$ of occurrence according to the underlying Gaussian distribution. For $N=30$ we observe no value above $3$, while for $N=50$ we observed $1.1\%$. It can therefore be said that the error measure for $N=30$ is more conservative, but both seem to show reasonable behaviour.}
    \item{From Figures \ref{fig:Con_MC_30_hybrid}-\ref{fig:Con_MC_50_nonhybrid} it can be seen that the hybrid approaches both lead to generally good solutions, while the non-hybrid approach is unable to deal with the spread of the trajectories for constraint $g_2$. The resulting  Further, it can be seen that the uncertainty of GP hybrid 50 is less than GP hybrid 30 from the significantly smaller spread of constraint $g_2$, which is as expected given the observations from Figure \ref{fig:Boxplot_hybrid}.}
    \item{Figure \ref{fig:Objective_plots} illustrates the better performance of GP hybrid 50 over GP hybrid obtaining a nearly $40\%$ increase in the objective on average. This is due to two reasons: Firstly more data leads to better decisions on average and secondly due to lower uncertainty the GP hybrid 50 is less conservative than GP hybrid 30. Lastly, GP non-hybrid 50 achieves high objective values by violating the second constraint $g_2$ by a substantial amount.}
    \item{Figures \ref{fig:Plant_constraint_plots} and \ref{fig:Plant_constraint_plots_nominal} show that the \textit{nominal} approach ignoring back-offs leads to constraint violations for all GP NMPC variations, while with back-offs the two hybrid approaches remain feasible throughout. GP non-hybrid 50 overshoots the constraint by a huge amount due to the NMPC becoming infeasible using the "real" plant model. Overall, the importance of back-offs is shown to maintain feasibility given the presence of plant-model mismatch for both GP hybrid cases, however for GP non-hybrid 50 the uncertainty is too large to attain a reasonable solution.}
    \item{In Table \ref{tab:probability_OCP_time} the average computational times are between $78$ms and $174$ms. It can be seen that the GP hybrid approaches have higher computational times, which is due to the discretization required in the NMPC optimization problem. Overall the computational time of a single NMPC iteration is relatively low, while the offline computational time required to attain the back-offs is relatively high.}
\end{itemize}

\begin{figure}[ht] \centering
   \includegraphics[width=1\textwidth]{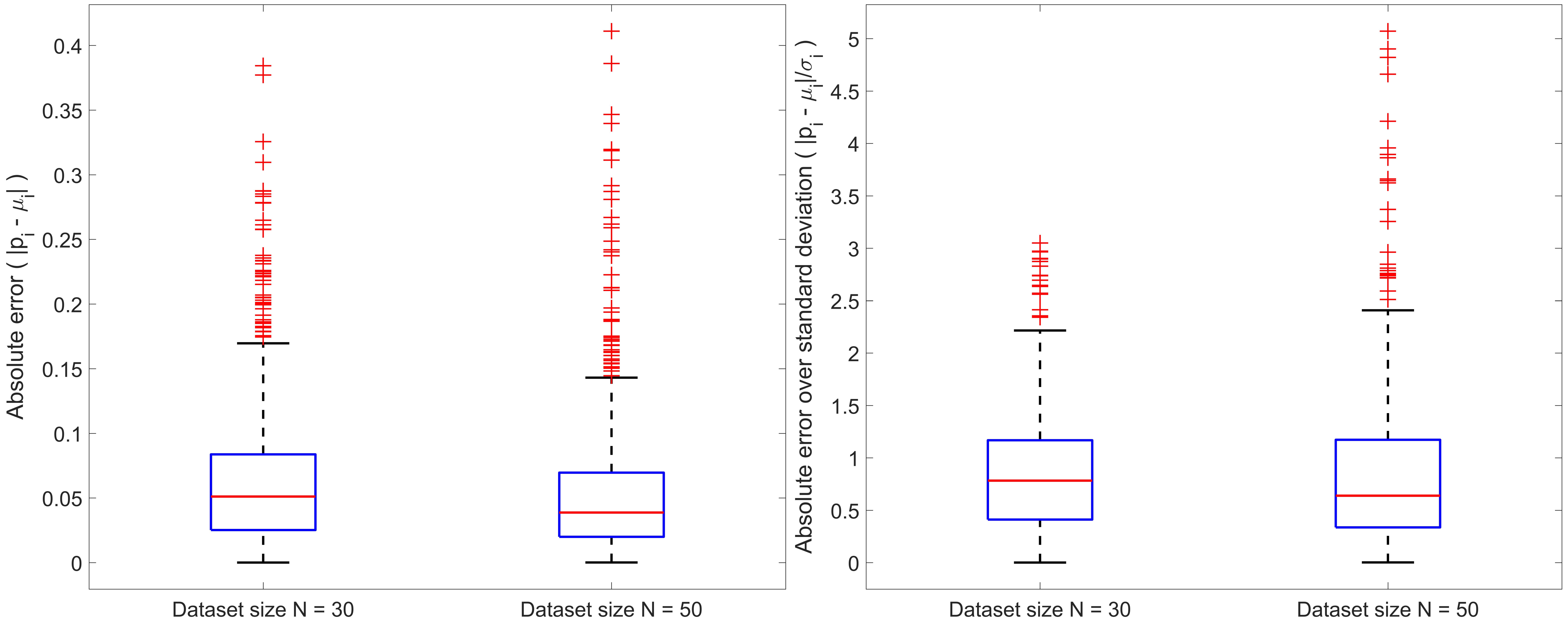}
  \caption{GP hybrid model cross-validation for dataset sizes $N=30$ and $N=50$ using 1000 randomly generated points in the same range as the training datapoints. The LHS graph shows the box plot of the absolute error, while the RHS graph shows the absolute error over the standard deviation.}
  \label{fig:Boxplot_hybrid}
\end{figure}

\begin{figure}[ht] \centering
   \includegraphics[width=1\textwidth]{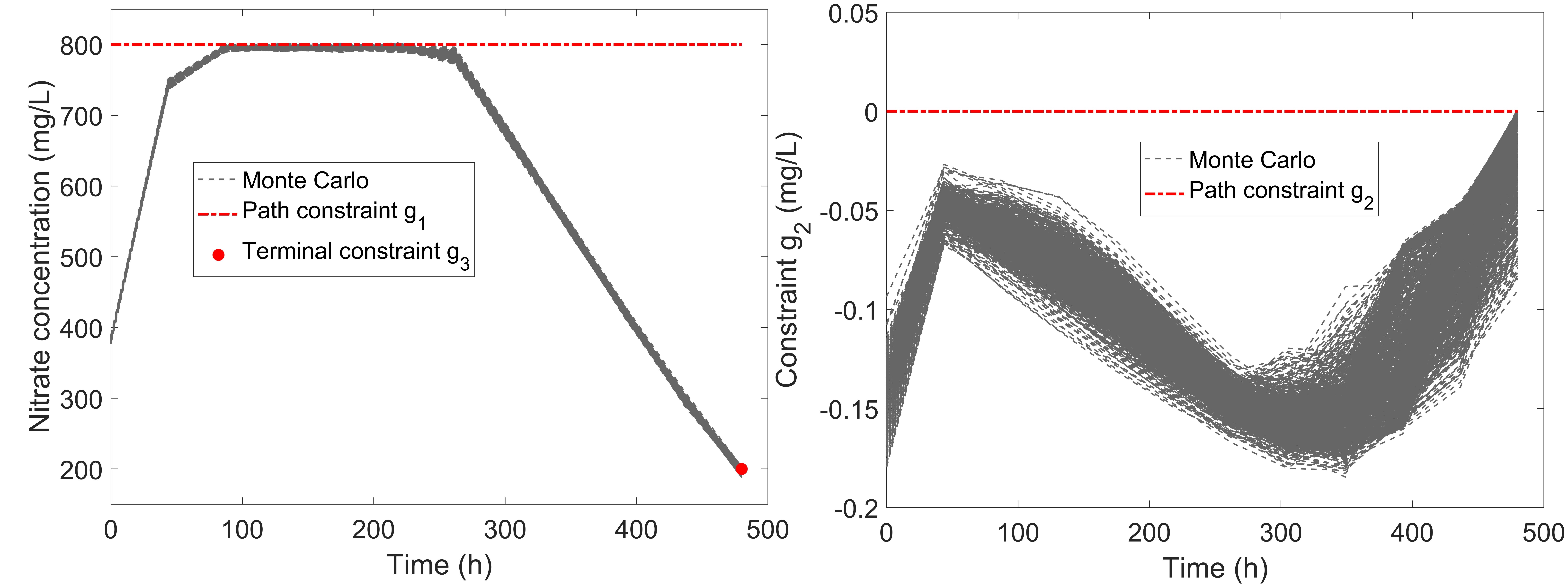}
  \caption{The 1000 MC trajectories at the final back-off iteration of the nitrate concentration for the constraints $g_1$ and $g_2$ (LHS) and the ratio of bioproduct to biomass constraint $g_2$ (RHS) for hybrid GP $N=30$.}
  \label{fig:Con_MC_30_hybrid}
\end{figure}

\begin{figure}[ht] \centering
   \includegraphics[width=1\textwidth]{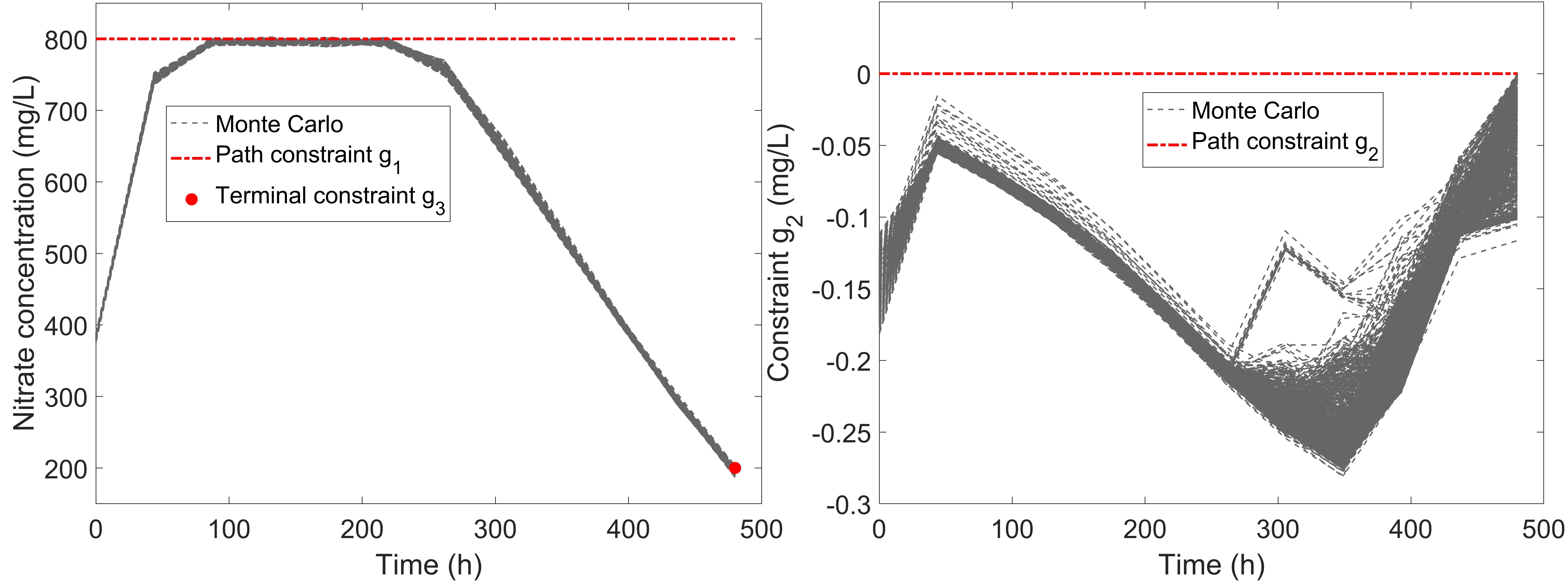}
  \caption{The 1000 MC trajectories at the final back-off iteration of the nitrate concentration for the constraints $g_1$ and $g_2$ (LHS) and the ratio of bioproduct to biomass constraint $g_2$ (RHS) for hybrid GP $N=50$.}
  \label{fig:Con_MC_50_hybrid}
\end{figure}

\begin{figure}[H] \centering
   \includegraphics[width=1\textwidth]{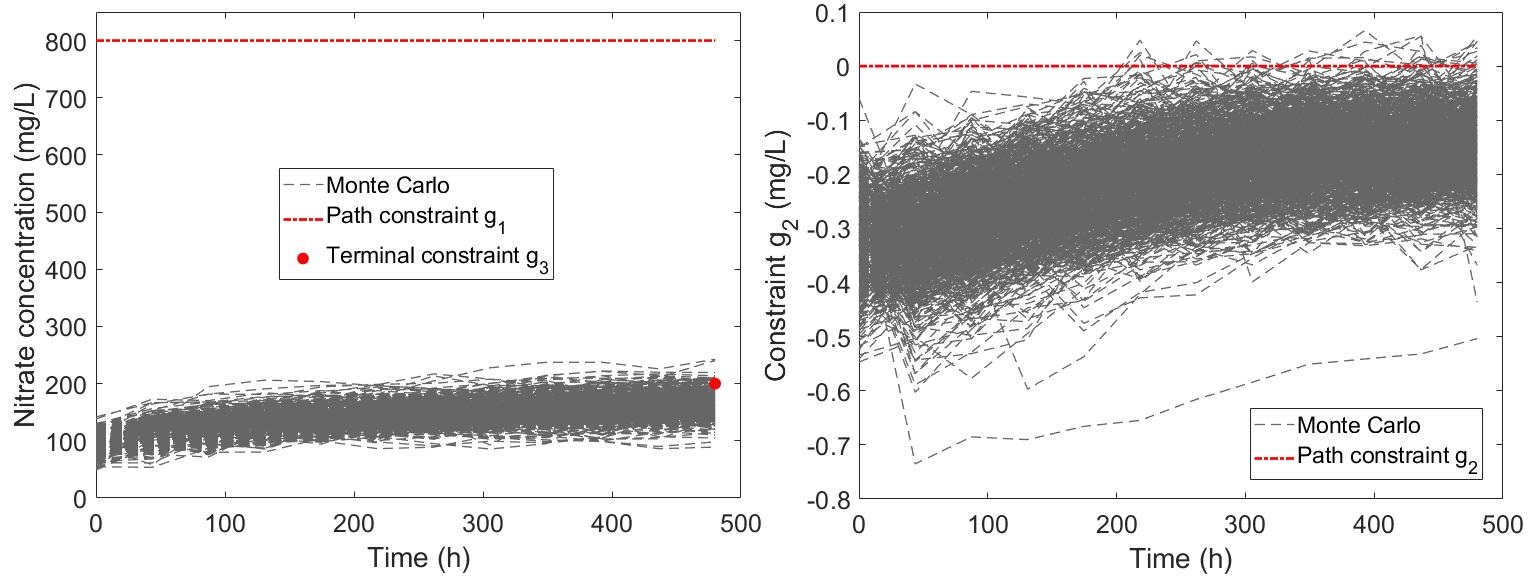}
  \caption{The 1000 MC trajectories at the final back-off iteration of the nitrate concentration for the constraints $g_1$ and $g_2$ (LHS) and the ratio of bioproduct to biomass constraint $g_2$ (RHS) for the non-hybrid GP with $N=50$ modelling the entire state space model.}
  \label{fig:Con_MC_50_nonhybrid}
\end{figure}

\begin{figure}[ht] \centering
   \includegraphics[width=1\textwidth]{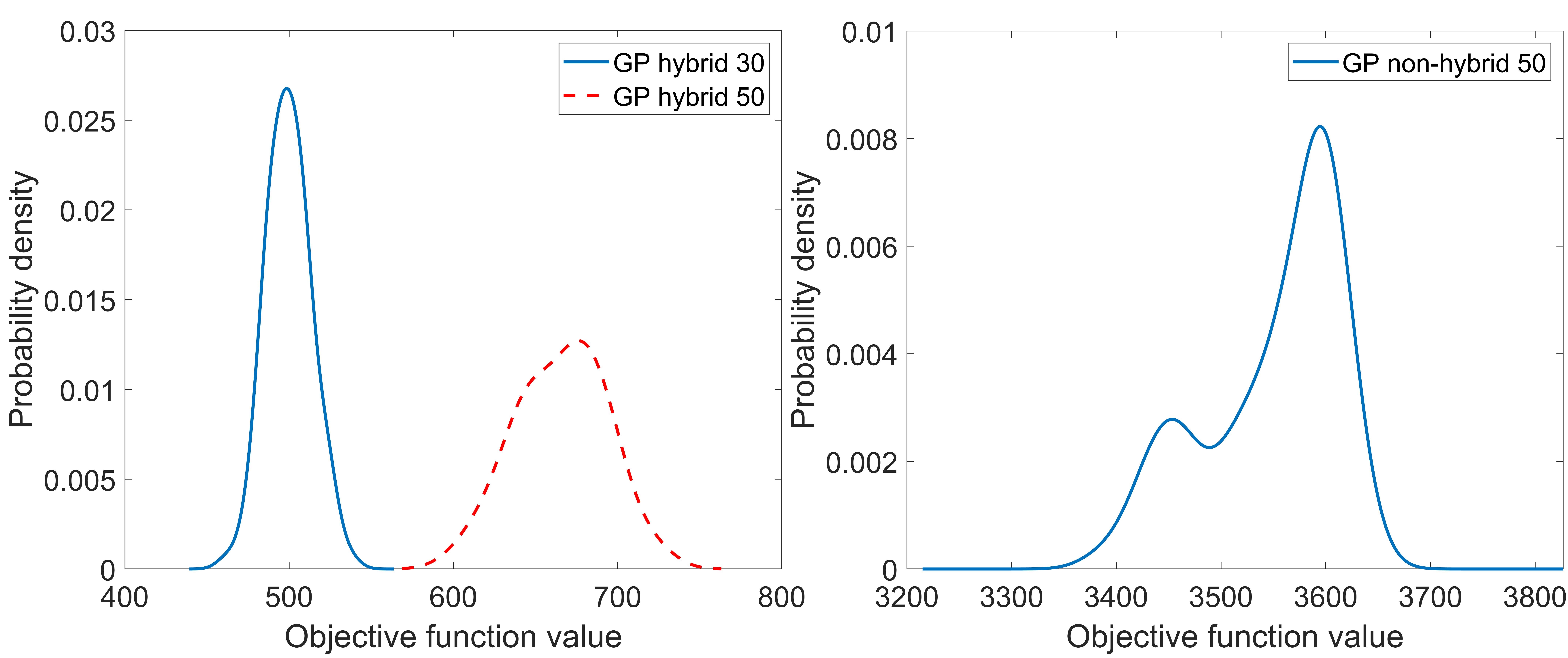}
  \caption{Probability density function for the "real" plant objective values for GP hybrid $N=30$ and $N=50$ on the LHS, and for the non-hybrid GP with $N=50$ on the RHS.}
  \label{fig:Objective_plots}
\end{figure}

\begin{figure}[ht] \centering
   \includegraphics[width=1\textwidth]{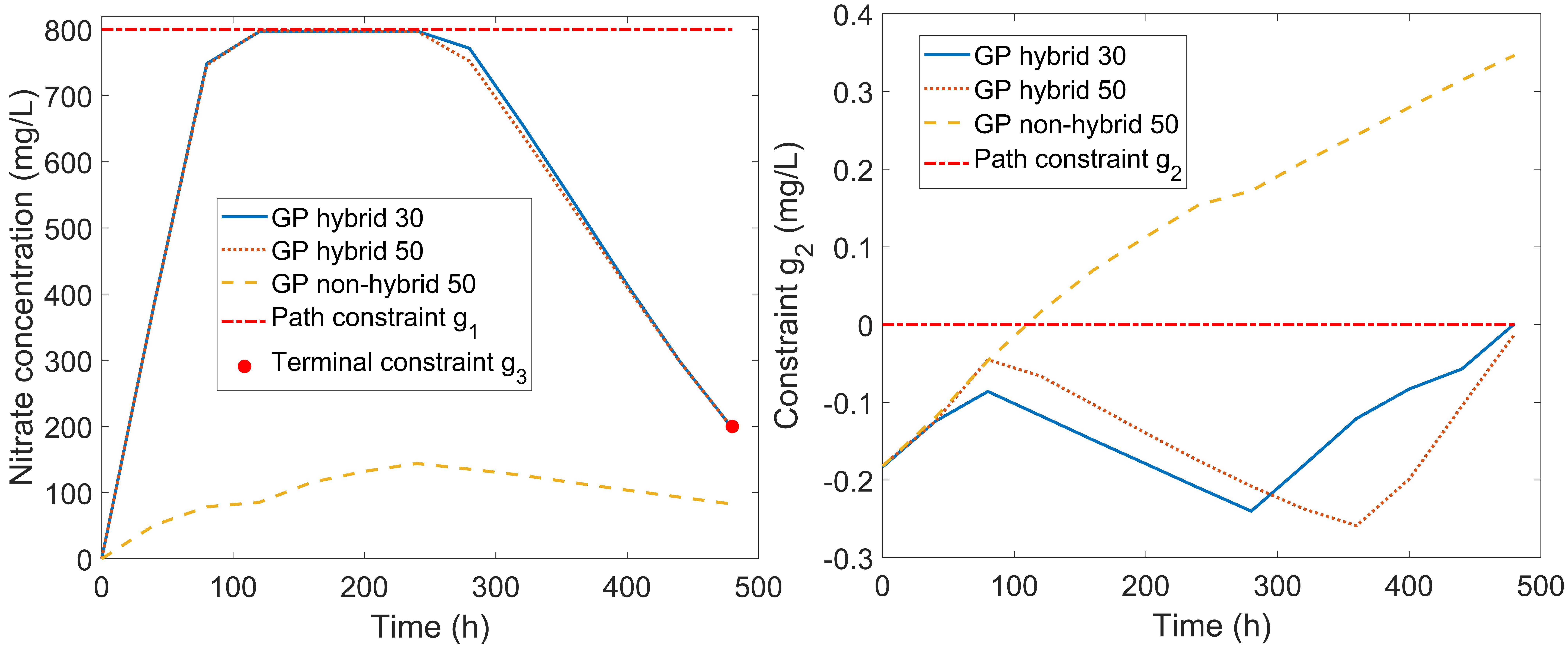}
  \caption{$90$th percentile trajectory values of the nitrate concentration for constraints $g_1$ and $g_3$ (LHS) and the ratio of the bioproduct constraint $g_2$ (RHS) for all variations applied to the "real" plant with the final tightened constraint set.}
  \label{fig:Plant_constraint_plots}
\end{figure}

\begin{figure}[ht] \centering
   \includegraphics[width=1\textwidth]{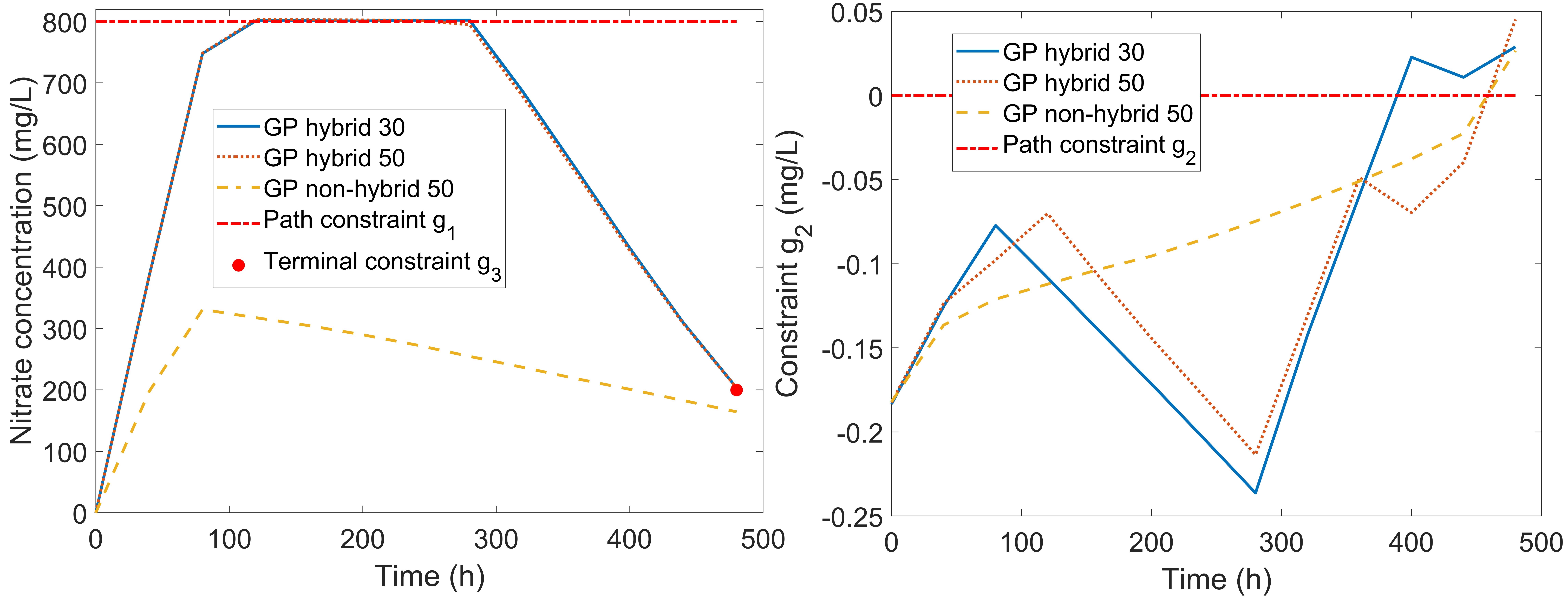}
  \caption{$90$th percentile trajectory values of the nitrate concentration for constraints $g_1$ and $g_3$ (LHS) and the ratio of the bioproduct constraint $g_2$ (RHS) for all variations applied to the "real" plant with back-off values set.}
  \label{fig:Plant_constraint_plots_nominal}
\end{figure}

\begin{table}[H]
\centering
\caption{Lower bound on the probability of satisfying the joint constraint $\hat{\beta}_{lb}$, average computational times to solve a single OCP for the GP NMPC, and the average computational time required to complete one back-off iteration.}
\begin{tabular}{*{4}{l}}
Algorithm variation & Probability $\hat{\beta}_{lb}$ & OCP time (ms) & Back-off iteration time (s) \\
\hline
GP hybrid 30        & 0.89  & 109 & 1316  \\
GP hybrid 50        & 0.91  & 174 & 2087  \\
GP non-hybrid 50    & 0.91  & 78  & 824
\end{tabular}
\label{tab:probability_OCP_time}
\end{table}

\section{Conclusions} \label{sec:conclusions}
In conclusion, a new approach is proposed to combine first principles derived models with black-box GP for NMPC. In addition, it is shown how the probabilistic nature of the GPs can be exploited to sample functions of possible dynamic models. These in turn are used to determine explicit back-offs, such that closed-loop simulations of the sampled models remain feasible to a high probability. It is shown how probabilistic guarantees can be obtained based on the number of constraint violations of the simulations. Computational time is kept low by carrying-out the constraint tightening is performed offline. Lastly, a challenging semi-batch reactor case study demonstrates the efficiency and potential for this technique to operate complex dynamic systems. 

\bibliographystyle{abbrv}
\bibliography{Backoff_GP_JPC,Hybrid_GP}
\end{document}